\newtheorem{thm}{Theorem}
\newtheorem{cor}{Corollary}
\newtheorem{lem}{Lemma}
\newcommand{\Rmnum}[1]{\expandafter\@slowromancap\romannumeral #1@}
\DeclareMathOperator*{\argmin}{arg\,min}
\newcommand{\abs}[1]{\left\vert#1\right\vert}
\newcommand{\norm}[1]{\left\Vert#1\right\Vert}
\newcommand{\etal}{\textit{et al.} }
\newcommand{\pdf}{\textit{pdf} }
\newcommand{\E}{\textnormal{E} }
\newcommand{\Ball}{\textnormal{Ball} }
\newcommand{\mbf}{\mathbf}
\newcommand{\sbf}{\boldsymbol}
\begin{document}

\title{Non-Random Coding Error Exponent for Lattices}


\author{
Yuval~Domb,~\IEEEmembership{Student Member,~IEEE,}
Meir~Feder,~\IEEEmembership{Fellow,~IEEE}
\thanks{A subset of this work was presented at the IEEE International Symposium on Information Theory (ISIT) 2012.}
}


\maketitle

\begin{abstract}
An upper bound on the error probability of specific lattices, based on their distance-spectrum, is constructed.
The derivation is accomplished using a simple alternative to the Minkowski-Hlawka mean-value theorem of the geometry of numbers.
In many ways, the new bound greatly resembles the Shulman-Feder bound for linear codes.
Based on the new bound, an error-exponent is derived for specific lattice \textit{sequences} (of increasing dimension) over the AWGN channel.
Measuring the sequence's gap to capacity, using the new exponent, is demonstrated.
\end{abstract}

\IEEEpeerreviewmaketitle

\section{Introduction}
For continuous channels, Infinite Constellation (IC) codes are the natural coded-modulation scheme.
The encoding operation is a simple one-to-one mapping from the information messages to the IC codewords.
The decoding operation is equivalent to finding the ``closest''\footnotemark{} IC codeword to the corresponding channel output.
\footnotetext{Closest in the sense of the appropriate distance measure for that channel.}
It has been shown that codes based on linear ICs (a.k.a. Lattices) can achieve optimal error performance \cite{j:deBudaTheUpperError}, \cite{j:ErezAchievingCapacity}.
A widely accepted framework for lattice codes' error analysis is commonly referred to as Poltyrev's setting \cite{j:PoltyrevOnCoding}.
In Poltyrev's setting the code's shaping region, defined as the finite subset of the otherwise infinite set of lattice points, is ignored, and the lattice structure is analyzed for its coding (soft packing) properties only.
Consequently, the usual rate variable $R$ is infinite and replaced by the Normalized Log Density (NLD), $\delta$.
The lattice analogous to Gallager's random-coding error-exponent \cite{b:GallagerInformationTheory}, over a random linear codes ensemble, is Poltyrev's error-exponent over a set of lattices of constant density.

Both Gallager and Poltyrev's error-exponents are asymptotic upper bounds for the exponential behavior of the average error probability over their respective ensembles.
Since it is an average property, examining a specific code from the ensemble using these exponents is not possible.
Various upper error bounding techniques and bounds for specific codes and code families have been constructed for linear codes over discrete channels \cite{j:ShamaiVariationsOnThe}, while only a few have been devised for specific lattices over particular continuous channels \cite{j:HerzbergTechniquesofBounding}.

The Shulman-Feder Bound (SFB) \cite{j:ShulmaRandomCoding} is a simple, yet useful upper bound for the error probability of a specific linear code over a discrete-memoryless channel.
In its exponential form it states that the average error probability of a $q$-ary code $\mathcal{C}$ is upper bounded by
\begin{align}
&P_e(\mathcal{C}) \leq e^{-nE_r\left(R+\frac{\log\alpha}{n}\right)} \\
&\alpha = \max_{\tau} \frac{\mathcal{N}_{\mathcal{C}}(\tau)}{\mathcal{N}_r(\tau)} \frac{e^{nR}}{e^{nR}-1}
\end{align}
where $n$, $R$, and $E_r(R)$ are the dimension, rate, and random-coding exponent respectively, and $\mathcal{N}_{\mathcal{C}}(\tau)$ and $\mathcal{N}_r(\tau)$ are the number of codewords of type $\tau$ for $\mathcal{C}$ and for an average random code, respectively (i.e. distance-spectrum).
The SFB and its extensions have lead to significant results in coding theory.
Amongst those is the error analysis for Maximum Likelihood (ML) decoding of LDPC codes \cite{j:MillerBoundsOnThe}.
The main motivation of this paper is to find the SFB analogue for lattices.
As such it should be an expression that upper bounds the error probability of a specific lattice, depend on the lattice's distance spectrum, and resemble the lattice random-coding bound.

The main result of this paper is a simple upper bounding technique for the error probability of a specific lattice code or code family, as a function of its distance-spectrum.
The bound is constructed by replacing the well-known Minkowski-Hlawka theorem \cite{b:LekkerkerkerGeometryOfNumbers} with a non-random alternative.
An interesting outcome of the main result is an error-exponent for specific lattice \textit{sequences}.
A secondary result of this paper is a tight distance-spectrum based, upper bound for the error probability of a specific lattice of finite dimension.

The paper is organized as follows: Sections \Rmnum{2} and \Rmnum{3} present the derivation of the Minkowski-Hlawka non-random alternatives, section \Rmnum{4} outlines a well-known general ML decoding upper bound, section \Rmnum{5} applies the new techniques to the general bound of section \Rmnum{4}, and section \Rmnum{6} presents a new error-exponent for specific lattice \textit{sequences} over the AWGN channel.

\section{Deterministic Minkowski-Hlawka-Siegel}
Recall that a lattice $\Lambda$ is a discrete $n$-dimensional subgroup of the Euclidean space $\mathds{R}^n$ that is an Abelian group under addition.
A generating matrix $G$ of $\Lambda$ is an $n\times n$ matrix with real valued coefficients constructed by concatenation of a properly chosen set of $n$ linearly independent vectors of $\Lambda$.
The generating matrix $G$ defines the lattice $\Lambda$ by $\Lambda=\{\sbf{\lambda}: \sbf{\lambda}=G\mbf{u}, \mbf{u}\in\mathds{Z}^n\}$.
A fundamental parallelepiped of $\Lambda$, associated with $G$ is the set of all points $p=\sum_{i=1}^n u_i g_i$ where $0\leq u_i<1$ and $\{g_i\}_{i=1}^n$ are the basis vectors of $G$.
The lattice determinant, defined as $\det{\Lambda}\equiv\abs{\deg{G}}$, is also the volume of the fundamental parallelepiped.
Denote by $\beta$ and $\delta$ the density and NLD of $\Lambda$ respectively; thus $\beta = e^{n\delta} = (\det{\Lambda})^{-1}$.

The lattice-dual of the random linear codes ensemble, in finite-alphabet codes, is a set of lattices originally defined by Siegel \cite{j:SiegelAMeanValue,j:MacbeathAModifiedForm}, for use in proving what he called the Mean-Value Theorem (MVT).
This theorem, often referred to as the Minkowski-Hlawka-Siegel (MHS) theorem, is a central constituent in upper error bounds on lattices.
The theorem states that for any dimension $n\geq 2$, and any bounded Riemann-integrable function $g(\sbf{\lambda})$ there exists a lattice $\Lambda$ of density $\beta$ for which
\begin{equation}
\label{eq_mhs}
\sum_{\sbf{\lambda}\in\Lambda\setminus\{0\}} g(\sbf{\lambda})
\leq \int_{\mathds{R}^n} g\left(\frac{\mbf{x}}{e^{\delta}}\right) d\mbf{x}
= \beta \int_{\mathds{R}^n} g(\mbf{x}) d\mbf{x}.
\end{equation}
Siegel proved the theorem by averaging over a fundamental set\footnotemark{} of all $n$-dimensional lattices of unit density.
The disadvantages of Siegel's theorem are similar to the disadvantages of the random-coding theorem.
Since the theorem is an average property of the ensemble, it can be argued that there exists at least a single specific lattice, from the ensemble, that obeys it; though finding that lattice cannot be aided by the theorem.
Neither can the theorem aid in analysis of any specific lattice.
Alternatives to \eqref{eq_mhs}, constructed for specific lattices, based on their distance-spectrum, are introduced later in this section.
\footnotetext{Let $\Upsilon$ denote the multiplicative group of all non-singular $n\times n$ matrices with determinant $1$ and let $\Phi$ denote the subgroup of integral matrices in $\Upsilon$.
Siegel's fundamental set is defined as the set of lattices whose generating matrices form a fundamental domain of $\Upsilon$ with regards to right multiplication by $\Phi$ (see section 19.3 of \cite{b:LekkerkerkerGeometryOfNumbers}).}

We begin with a few definitions, before stating our central lemma.
The lattice $\Lambda_0$ always refers to a specific known $n$-dimensional lattice of density $\beta$, rather than $\Lambda$ which refers to some unknown, yet existing $n$-dimensional lattice.
The lattice $\widetilde{\Lambda}_0$ is the normalized version of $\Lambda_0$ (i.e. $\det(\widetilde{\Lambda}_0) = 1$).
Define the distance series of $\Lambda_0$ as the ordered series of its unique norms $\{\lambda_j\}_{j=0}^\infty$, such that $\lambda_1$ is its minimal norm and $\lambda_0\triangleq 0$.
$\{\widetilde{\lambda}_j\}_{j=1}^\infty$ is defined for $\widetilde{\Lambda}_0$ respectively.
The normalized continuous distance-spectrum of $\Lambda_0$ is defined as
\begin{equation}
\label{eq_N}
N(x) = \sum_{j=1}^\infty \mathcal{N}_j \delta(x-\widetilde{\lambda}_j)
\end{equation}
where $\{\mathcal{N}_j\}_{j=1}^\infty$ is the ordinary distance-spectrum of $\Lambda_0$, and $\delta(\cdot)$ is the Dirac delta function.
Let $\Gamma$ denote the group\footnotemark{} of all orthogonal $n\times n$ matrices with determinant $+1$ and let $\mu(\gamma)$ denote its normalized measure so that $\int_{\Gamma} d\mu(\gamma) = 1$.
The notation $\gamma\Lambda_0$ is used to describe the lattice generated by $\gamma G$, where $G$ is a generating matrix of the lattice $\Lambda_0$.
\footnotetext{This group, consisting only of rotation matrices, is usually called the special orthogonal group.}

Our central lemma essentially expresses Siegel's mean-value theorem for a degenerate ensemble consisting of a specific known lattice $\Lambda_0$ and all its possible rotations around the origin.
\begin{lem}
\label{lem_mean_value_1}
Let $\Lambda_0$ be a specific $n$-dimensional lattice with NLD $\delta$, and $g(\sbf{\lambda})$ be a Riemann-integrable function, then there exists an orthogonal rotation $\gamma$ such that
\begin{equation}
\label{eq_mean_value_1}
\sum_{\sbf{\lambda}\in\gamma\Lambda_0\setminus\{0\}} g(\sbf{\lambda})
\leq \int_{\mathds{R}^n} \mathfrak{N}(\norm{\mbf{x}}) g\left(\frac{\mbf{x}}{e^{\delta}}\right) d\mbf{x}
\end{equation}
with
\begin{equation}
\label{eq_N_}
\mathfrak{N}(x)
\triangleq \left\{
\begin{array}{lr}
\frac{N(x)}{nV_n x^{n-1}} & : x > 0 \\
0 & : x \leq 0
\end{array}
\right.
\end{equation}
where $V_n$ is the volume of an $n$-dimensional unit sphere, and $\norm{\cdot}$ denotes the Euclidean norm.
\end{lem}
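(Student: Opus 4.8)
The plan is to prove the statement by an averaging argument over the rotation group $\Gamma$, mirroring Siegel's averaging over the ensemble of unit-density lattices but now over the degenerate ensemble $\set{\gamma\Lambda_0 : \gamma\in\Gamma}$. Existence of a single good rotation will follow once I show that the \emph{average} over $\Gamma$ of the left-hand side of \eqref{eq_mean_value_1} equals its right-hand side: since at least one rotation must attain a value no larger than the mean, the claimed $\gamma$ exists. Concretely, I would form $\int_\Gamma \sum_{\sbf{\lambda}\in\gamma\Lambda_0\setminus\set{0}} g(\sbf{\lambda})\, d\mu(\gamma)$, interchange the normalized $\Gamma$-integral with the lattice sum, and write each nonzero point of $\gamma\Lambda_0$ as $\gamma\sbf{\lambda}_0$ with $\sbf{\lambda}_0\in\Lambda_0\setminus\set{0}$, so that the average becomes $\sum_{\sbf{\lambda}_0} \int_\Gamma g(\gamma\sbf{\lambda}_0)\, d\mu(\gamma)$.

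The key geometric fact is that $\Gamma=SO(n)$ acts transitively on the sphere of radius $r=\norm{\sbf{\lambda}_0}$, and that its normalized Haar measure pushes forward to the uniform (normalized surface) measure on that sphere. Hence $\int_\Gamma g(\gamma\sbf{\lambda}_0)\, d\mu(\gamma)$ is exactly the spherical average $\ol g(r)\triangleq \frac{1}{nV_n r^{n-1}}\int_{\norm{\mbf y}=r} g(\mbf y)\, dS(\mbf y)$, where $nV_n r^{n-1}$ is the surface area of the radius-$r$ sphere. Since rotation preserves lengths, $\gamma\Lambda_0$ carries $\mathcal{N}_j$ points at each norm $\lambda_j$, so grouping the sum by common norm collapses the average to $\sum_{j\geq 1}\mathcal{N}_j\,\ol g(\lambda_j)$.

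Separately, I would expand the right-hand side of \eqref{eq_mean_value_1} in spherical coordinates as $\int_0^\infty \mathfrak{N}(r)\,\big[\int_{S^{n-1}} g(r\sbf{\omega}/e^\delta)\, dS(\sbf{\omega})\big]\, r^{n-1}\, dr$. Substituting $\mathfrak{N}(r)\,r^{n-1}=N(r)/(nV_n)$ from \eqref{eq_N_} and $N(r)=\sum_j \mathcal{N}_j\,\delta(r-\widetilde{\lambda}_j)$ from \eqref{eq_N}, the radial integral sifts each Dirac delta; meanwhile the scaling relation $\widetilde{\lambda}_j/e^\delta=\lambda_j$ (which follows from $\det\Lambda_0=e^{-n\delta}$, whence $\Lambda_0=e^{-\delta}\widetilde{\Lambda}_0$) turns the inner integral into the spherical average at radius $\lambda_j$. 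This yields $\sum_{j\geq 1}\mathcal{N}_j\,\ol g(\lambda_j)$ as well, so the rotational average and the claimed integral coincide, which closes the existence argument.

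The step I expect to be the main obstacle is the rigorous justification of interchanging the $\Gamma$-integral with the infinite lattice sum, together with the Dirac-delta manipulation: one must use the Riemann-integrability of $g$ (effectively its integrable decay, or boundedness with compact support) to legitimize Fubini/Tonelli on the sum-integral and to make the formal sifting of the deltas in $N$ a genuine limit of the smooth approximation implicit in the distance-spectrum. The transitivity of $SO(n)$ on spheres and the Haar-to-surface pushforward is standard but should be stated with care, since the entire identification of the rotational average with the surface average rests on it.
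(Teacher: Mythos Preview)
Your proposal is correct and follows essentially the same approach as the paper: average the lattice sum over the rotation group, swap the $\Gamma$-integral with the sum, replace the rotational average by a spherical average via the $SO(n)$-to-sphere pushforward, group by norm using the distance spectrum, and rewrite the result as the claimed integral over $\mathds{R}^n$; existence then follows since some rotation beats the mean. The only cosmetic difference is that the paper transforms the averaged sum step by step into the integral, whereas you reduce both sides separately to the common form $\sum_{j\geq 1}\mathcal{N}_j\,\ol g(\lambda_j)$.
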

\begin{proof}
Let $\Theta$ denote the subspace of all points $\sbf{\theta}$ in the $n$-dimensional space with $\norm{\sbf{\theta}}=1$, so that $\Theta$ is the surface of the unit sphere. Let $\mu(\sbf{\theta})$ denote the ordinary solid-angle measure on this surface, normalized so that $\int_{\Theta} d\mu(\sbf{\theta}) = 1$.
We continue with the following set of equalities
\begin{align}
\label{eq_mean_value_1_proof}
\int_\Gamma \sum_{\sbf{\lambda}\in\gamma\Lambda_0\setminus\{0\}} &g(\sbf{\lambda}) d\mu(\gamma) \nonumber \\
&= \sum_{\sbf{\lambda}\in\Lambda_0\setminus\{0\}} \int_\Gamma g(\gamma\sbf{\lambda}) d\mu(\gamma) \nonumber \\
&= \sum_{\mbf{\widetilde{\lambda}}\in\widetilde{\Lambda}_0\setminus\{0\}} \int_\Gamma g\left(\frac{\gamma\sbf{\widetilde{\lambda}}}{e^{\delta}}\right) d\mu(\gamma) \nonumber \\
&= \sum_{\mbf{\widetilde{\lambda}}\in\widetilde{\Lambda}_0\setminus\{0\}} \int_\Theta g\left(\frac{\norm{\sbf{\widetilde{\lambda}}}\sbf{\theta}}{e^{\delta}}\right) d\mu(\sbf{\theta}) \nonumber \\
&= \int_{0^+}^\infty N(R) \int_\Theta g\left(\frac{R\sbf{\theta}}{e^{\delta}}\right) d\mu(\sbf{\theta}) dR \nonumber \\
&= \int_{0^+}^\infty \int_\Theta \frac{N(R)}{nV_nR^{n-1}} \cdot g\left(\frac{R\sbf{\theta}}{e^{\delta}}\right) d\mu(\sbf{\theta}) dV_nR^n \nonumber \\
&= \int_{\mathds{R}^n\setminus\{0\}} \frac{N(\norm{\mbf{x}})}{nV_n\norm{\mbf{x}}^{n-1}} \cdot g\left(\frac{\mbf{x}}{e^{\delta}}\right) d\mbf{x} \nonumber \\
&= \int_{\mathds{R}^n} \mathfrak{N}(\norm{\mbf{x}}) g\left(\frac{\mbf{x}}{e^{\delta}}\right) d\mbf{x}.
\end{align}
where the third equality follows from the definition of $\Gamma$ and $\Theta$ and the measures $\mu(\gamma)$ and $\mu(\sbf{\theta})$, the fourth equality is due to the circular-symmetry of the integrand, and the sixth equality is a transformation from generalized spherical polar coordinates to the cartesian system (see Lemma 2 of \cite{j:MacbeathAModifiedForm}).

Finally there exists at least one rotation $\gamma\in\Gamma$ for which the sum over $\gamma\Lambda_0$ is upper bounded by the average.
\end{proof}
The corollary presented below is a restricted version of lemma \ref{lem_mean_value_1} constrained to the case where the function $g(\sbf{\lambda})$ is circularly-symmetric, (i.e. $g(\sbf{\lambda})=g(\norm{\sbf{\lambda}})$).
To simplify the presentation, it is implicitly assumed that $g(\sbf{\lambda})$ is circularly-symmetric for the remainder of this paper.
It should be noted that all results presented hereafter apply also to a non-symmetric $g(\sbf{\lambda})$ with an appropriately selected rotation $\gamma$ of $\Lambda_0$.
\begin{cor}
\label{lem_mean_value_2}
Let $\Lambda_0$ be a specific $n$-dimensional lattice with NLD $\delta$, and $g(\sbf{\lambda})$ be a circularly-symmetric Riemann-integrable function, then
\begin{equation}
\label{eq_mean_value_2}
\sum_{\sbf{\lambda}\in\Lambda_0\setminus\{0\}} g(\sbf{\lambda})
= \int_{\mathds{R}^n} \mathfrak{N}(\norm{\mbf{x}}) g\left(\frac{\mbf{x}}{e^{\delta}}\right) d\mbf{x}
\end{equation}
with $\mathfrak{N}(x)$ as defined in lemma \ref{lem_mean_value_1}.
\end{cor}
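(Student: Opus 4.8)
The plan is to exploit the extra structure that circular symmetry imposes on the summand, which collapses the inequality of Lemma~\ref{lem_mean_value_1} into an equality and removes the need to single out a particular rotation. The key observation is that when $g(\sbf{\lambda}) = g(\norm{\sbf{\lambda}})$, the left-hand side of \eqref{eq_mean_value_2} is completely insensitive to which rotation of $\Lambda_0$ we sum over. Indeed, every $\gamma\in\Gamma$ is orthogonal and hence norm-preserving, so $g(\gamma\sbf{\lambda}) = g(\norm{\gamma\sbf{\lambda}}) = g(\norm{\sbf{\lambda}}) = g(\sbf{\lambda})$ at each lattice point. Since $\gamma$ maps $\Lambda_0\setminus\{0\}$ bijectively onto $\gamma\Lambda_0\setminus\{0\}$, this would give
\begin{equation}
\sum_{\sbf{\lambda}\in\gamma\Lambda_0\setminus\{0\}} g(\sbf{\lambda})
= \sum_{\sbf{\lambda}\in\Lambda_0\setminus\{0\}} g(\sbf{\lambda})
\end{equation}
for every $\gamma$, i.e. the summand is a constant as a function of $\gamma$.

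First I would substitute this rotation-independence into the chain of equalities \eqref{eq_mean_value_1_proof} already established in the proof of Lemma~\ref{lem_mean_value_1}, noting that that chain never used symmetry of $g$ and therefore holds verbatim. Because the integrand is now constant in $\gamma$ and $\mu$ is normalized so that $\int_\Gamma d\mu(\gamma) = 1$, the average over $\Gamma$ simply reproduces that constant,
\begin{equation}
\int_\Gamma \sum_{\sbf{\lambda}\in\gamma\Lambda_0\setminus\{0\}} g(\sbf{\lambda}) \, d\mu(\gamma)
= \sum_{\sbf{\lambda}\in\Lambda_0\setminus\{0\}} g(\sbf{\lambda}),
\end{equation}
while the right end of \eqref{eq_mean_value_1_proof} evaluates the same average to $\int_{\mathds{R}^n} \mathfrak{N}(\norm{\mbf{x}}) g(\mbf{x}/e^{\delta}) \, d\mbf{x}$. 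Equating the two expressions yields \eqref{eq_mean_value_2} directly.

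I expect there to be essentially no hard part here: the entire analytic content lives in Lemma~\ref{lem_mean_value_1}, and the corollary is a specialization. The only two points that merit an explicit word are that orthogonal $\gamma$ preserve the Euclidean norm and act as a bijection on the punctured lattice (so the sum is genuinely rotation-invariant), and that the final averaging step of Lemma~\ref{lem_mean_value_1}, which yielded only the existence of a favorable $\gamma$, becomes vacuous once the ensemble is degenerate with respect to a symmetric $g$. In that case the average, the maximum, and the value at any individual $\gamma$ all coincide, so the inequality is promoted to the equality asserted in \eqref{eq_mean_value_2}.
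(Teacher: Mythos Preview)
Your proposal is correct and follows essentially the same approach as the paper: both arguments invoke the chain of equalities \eqref{eq_mean_value_1_proof} from Lemma~\ref{lem_mean_value_1} and then observe that circular symmetry forces $g(\gamma\sbf{\lambda})=g(\sbf{\lambda})$, so the rotation average over $\Gamma$ collapses to the unrotated sum. The paper's proof swaps the sum and the $\Gamma$-integral and pulls the constant $g(\sbf{\lambda})$ out, whereas you first note that each rotated sum equals the unrotated one and then average a constant; these are the same computation phrased two ways.
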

\begin{proof}
When $g(\sbf{\lambda})$ is circularly-symmetric,
\begin{align}
\int_\Gamma \sum_{\sbf{\lambda}\in\gamma\Lambda_0\setminus\{0\}} g(\sbf{\lambda}) d\mu(\gamma)
&= \sum_{\sbf{\lambda}\in\Lambda_0\setminus\{0\}} \int_\Gamma g(\gamma\sbf{\lambda}) d\mu(\gamma) \nonumber \\
&= \sum_{\sbf{\lambda}\in\Lambda_0\setminus\{0\}} g(\sbf{\lambda}) \int_\Gamma d\mu(\gamma) \nonumber \\
&= \sum_{\sbf{\lambda}\in\Lambda_0\setminus\{0\}} g(\sbf{\lambda})
\end{align}
\end{proof}

The right-hand side of \eqref{eq_mean_value_2} can be trivially upper bounded by replacing $\mathfrak{N}(x)$ with a suitably chosen function $\alpha(x)$, so that
\begin{equation}
\label{eq_alphai}
\int_{\mathds{R}^n} \mathfrak{N}(\norm{\mbf{x}}) g\left(\frac{\mbf{x}}{e^{\delta}}\right) d\mbf{x}
\leq \int_{\mathds{R}^n} \alpha(\norm{\mbf{x}}) g\left(\frac{\mbf{x}}{e^{\delta}}\right) d\mbf{x}.
\end{equation}
Provided the substitution, it is possible to define the following upper bounds:
\begin{thm}[Deterministic Minkowski-Hlawka-Siegel (DMHS)]
\label{thm_nrmh1}
Let $\Lambda_0$ be a specific $n$-dimensional lattice of density $\beta$, $g(\sbf{\lambda})$ be a bounded Riemann-integrable circularly-symmetric function, and $\alpha(x)$ be defined such that \eqref{eq_alphai} is satisfied, then
\begin{equation}
\label{eq_nrmh1}
\sum_{\sbf{\lambda}\in\Lambda_0\setminus\{0\}} g(\sbf{\lambda})
\leq \beta \left[ \max_{x\leq e^{\delta}\lambda_{\textnormal{max}}} \alpha(x) \right] \int_{\mathds{R}^n} g(\mbf{x}) d\mbf{x}
\end{equation}
where $\lambda_{\textnormal{max}}$ is the maximal $\norm{\mbf{x}}$ for which $g(\mbf{x})\neq 0$.
\end{thm}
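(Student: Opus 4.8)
The plan is to derive \eqref{eq_nrmh1} by relaxing, in three controlled steps, the exact identity already established in Corollary~\ref{lem_mean_value_2}. Because $g$ is circularly symmetric, \eqref{eq_mean_value_2} supplies the exact starting point
\[
\sum_{\sbf{\lambda}\in\Lambda_0\setminus\{0\}} g(\sbf{\lambda}) = \int_{\mathds{R}^n} \mathfrak{N}(\norm{\mbf{x}}) g\!\left(\frac{\mbf{x}}{e^{\delta}}\right) d\mbf{x},
\]
and the hypothesis \eqref{eq_alphai} then replaces $\mathfrak{N}(\norm{\mbf{x}})$ by $\alpha(\norm{\mbf{x}})$, producing an upper bound with $\alpha$ in the integrand. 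Boundedness of $g$ together with the finiteness of $\lambda_{\textnormal{max}}$ (i.e.\ compact support) guarantees all integrals here are finite.

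The crucial step is a support restriction. By the definition of $\lambda_{\textnormal{max}}$, the factor $g(\mbf{x}/e^{\delta})$ vanishes whenever $\norm{\mbf{x}/e^{\delta}} > \lambda_{\textnormal{max}}$, that is, whenever $\norm{\mbf{x}} > e^{\delta}\lambda_{\textnormal{max}}$. Hence over the entire effective domain of integration one has $\alpha(\norm{\mbf{x}}) \le \max_{x\le e^{\delta}\lambda_{\textnormal{max}}} \alpha(x)$, and this constant can be pulled outside the integral. This is where I expect the only genuine subtlety to lie: extracting the maximum as an inequality is legitimate only because $g$ is nonnegative (the same nonnegativity is precisely what lets \eqref{eq_alphai} follow from a pointwise domination $\alpha \ge \mathfrak{N}$). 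I would therefore either state $g\ge 0$ explicitly or note that it is the standing assumption inherited from the error-probability application; without it one would be forced to bound $\abs{g}$ and the clean extraction of $\max_x\alpha(x)$ would fail.

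Finally I would eliminate the scaling inside $g$ via the change of variables $\mbf{y} = \mbf{x}/e^{\delta}$, whose Jacobian contributes the factor $(e^{\delta})^n = e^{n\delta} = \beta$, converting $\int_{\mathds{R}^n} g(\mbf{x}/e^{\delta})\, d\mbf{x}$ into $\beta\int_{\mathds{R}^n} g(\mbf{y})\, d\mbf{y}$. Collecting the constant $\max_{x\le e^{\delta}\lambda_{\textnormal{max}}}\alpha(x)$ and the factor $\beta$ in front of $\int g$ yields exactly \eqref{eq_nrmh1}. Apart from the nonnegativity caveat, every step is routine bookkeeping; the analytic content is really carried by Corollary~\ref{lem_mean_value_2}, and this theorem merely repackages that identity into a form that isolates the tractable scalar $\max_x\alpha(x)$, which is the quantity that will subsequently play the role of the Shulman--Feder factor $\alpha$ in the exponent analysis.
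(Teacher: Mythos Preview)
Your proposal is correct and follows essentially the same approach as the paper's proof, which is a one-sentence sketch: substitute $\alpha(x)$ for $\mathfrak{N}(x)$ in \eqref{eq_mean_value_2} via \eqref{eq_alphai}, then pull $\max_{x\le e^{\delta}\lambda_{\textnormal{max}}}\alpha(x)$ outside the integral. Your version is more detailed---you make the support restriction, the nonnegativity caveat, and the change of variables $\mbf{y}=\mbf{x}/e^{\delta}$ (producing the factor $\beta$) explicit---but the underlying argument is identical.
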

\begin{proof}
Substitute a specific $\alpha(x)$ for $\mathfrak{N}(x)$ in \eqref{eq_mean_value_2}, and upper bound by taking the maximum value of $\alpha(x)$ over the integrated region, outside the integral.
\end{proof}
\begin{thm}[extended DMHS (eDMHS)]
\label{thm_nrmh2}
Let $\Lambda_0$ be a specific $n$-dimensional lattice with density $\beta$, $g(\sbf{\lambda})$ be a bounded Riemann-integrable circularly-symmetric function, $\alpha(x)$ be defined such that \eqref{eq_alphai} is satisfied, and $M$ be an positive integer then
\begin{align}
\label{eq_nrmh2}
\sum_{\sbf{\lambda}\in\Lambda_0/\{0\}} g(\sbf{\lambda})
\leq \beta \min_{\{R_j\}_{j=1}^M} \left( \sum_{j=1}^M \max_{\mathcal{R}_j} \alpha(x) \int_{\sbf{\mathcal{R}}_j} g(\mbf{x}) d\mbf{x} \right)
\end{align}
with
\begin{align}
&\mathcal{R}_j = \{x: x\geq 0,  e^{\delta}R_{j-1}<x\leq e^{\delta}R_j\} \\
&\sbf{\mathcal{R}}_j = \{\mbf{x}: \mbf{x}\in\mathds{R}^n, R_{j-1}<\norm{\mbf{x}}\leq R_j\}
\end{align}
where $\{R_j\}_{j=1}^M$ is an ordered set of real numbers with $R_0\triangleq 0$ and $R_M=\lambda_{\textnormal{max}}$, where $\lambda_{\textnormal{max}}$ is the maximal $\norm{\mbf{x}}$ for which $g(\mbf{x})\neq 0$.
\end{thm}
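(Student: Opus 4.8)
The plan is to obtain \eqref{eq_nrmh2} as a localized refinement of Theorem \ref{thm_nrmh1}, taking as my starting point the exact identity of Corollary \ref{lem_mean_value_2} combined with the substitution inequality \eqref{eq_alphai}. Chaining these two gives directly $\sum_{\sbf{\lambda}\in\Lambda_0\setminus\{0\}} g(\sbf{\lambda}) \leq \int_{\mathds{R}^n} \alpha(\norm{\mbf{x}}) g(\mbf{x}/e^{\delta}) d\mbf{x}$. Whereas Theorem \ref{thm_nrmh1} controls the factor $\alpha$ by a single global maximum over the whole support of the integrand, here I would instead cut the integration domain into $M$ concentric annular shells and bound $\alpha$ separately on each, which can only tighten the estimate.

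Concretely, I would first change variables via $\mbf{y} = \mbf{x}/e^{\delta}$, so that $d\mbf{x} = e^{n\delta}d\mbf{y} = \beta\,d\mbf{y}$ and $\norm{\mbf{x}} = e^{\delta}\norm{\mbf{y}}$, rewriting the right-hand side as $\beta \int_{\mathds{R}^n} \alpha(e^{\delta}\norm{\mbf{y}}) g(\mbf{y}) d\mbf{y}$. Next I would partition $\mathds{R}^n$ into the shells $\sbf{\mathcal{R}}_j = \{\mbf{y}: R_{j-1} < \norm{\mbf{y}} \leq R_j\}$ defined by an arbitrary ordered set $\{R_j\}_{j=1}^M$ with $R_0 = 0$ and $R_M = \lambda_{\textnormal{max}}$. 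The crucial observation is the scaling correspondence: when $\mbf{y}$ lies in $\sbf{\mathcal{R}}_j$, the argument $e^{\delta}\norm{\mbf{y}}$ of $\alpha$ ranges over precisely the interval $\mathcal{R}_j = (e^{\delta}R_{j-1}, e^{\delta}R_j]$, so that $\alpha(e^{\delta}\norm{\mbf{y}}) \leq \max_{\mathcal{R}_j}\alpha(x)$ on that shell. Pulling this constant out of each shell integral and summing yields $\sum_{\sbf{\lambda}\in\Lambda_0\setminus\{0\}} g(\sbf{\lambda}) \leq \beta \sum_{j=1}^M \max_{\mathcal{R}_j}\alpha(x) \int_{\sbf{\mathcal{R}}_j} g(\mbf{y}) d\mbf{y}$; renaming the dummy variable $\mbf{y}$ back to $\mbf{x}$ gives exactly the summand of \eqref{eq_nrmh2}. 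The prescription $R_M = \lambda_{\textnormal{max}}$ guarantees the shells exhaust the entire support of $g$, so no mass escapes the partition.

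Finally, since the left-hand sum is a fixed quantity that does not depend on the free partition points $\{R_j\}$, the derived inequality holds simultaneously for every admissible choice of them; I would therefore optimize over $\{R_j\}$ and retain the minimizing configuration, which produces the $\min_{\{R_j\}}$ appearing in the statement. I do not anticipate a deep difficulty, as the argument is fundamentally a shell-wise version of Theorem \ref{thm_nrmh1}. The one step demanding genuine care is the bookkeeping of the dilation factor $e^{\delta}$: the two families $\mathcal{R}_j$ and $\sbf{\mathcal{R}}_j$ are deliberately defined on opposite sides of this scaling — the former in the coordinate seen by $\alpha$, the latter in the coordinate seen by $g$ — and the entire estimate hinges on verifying that they are images of one another under $\mbf{x}\mapsto e^{\delta}\mbf{x}$, so that the local maximum of $\alpha$ is genuinely taken over the correct interval on each shell.
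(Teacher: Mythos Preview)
Your proposal is correct and follows essentially the same approach as the paper: start from the identity of Corollary~\ref{lem_mean_value_2}, replace $\mathfrak{N}$ by $\alpha$ via \eqref{eq_alphai}, split the resulting integral into concentric shells, bound $\alpha$ by its maximum on each shell, and finally optimize over the partition radii. Your write-up is in fact more explicit than the paper's about the change of variables $\mbf{y}=\mbf{x}/e^{\delta}$ and the resulting scaling correspondence between $\mathcal{R}_j$ and $\sbf{\mathcal{R}}_j$, which is exactly the bookkeeping the paper leaves implicit.
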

\begin{proof}
Substitute a specific $\alpha(x)$ for $\mathfrak{N}(x)$ in \eqref{eq_mean_value_2} and break up the integral over a non-overlapping set of spherical shells whose union equals $\mathds{R}^n$. Upper bound each shell integral by taking the maximum value of $\alpha(x)$ over it, outside the integral. Finally, the set of shells, or rather shell-defining radii is optimized such that the bound is tightest.
\end{proof}
The eDMHS may be viewed as a generalization of DMHS since for $M=1$ it defaults to it.
In addition, when $M\rightarrow\infty$ the eDMHS tends to the original integral.
Clearly, both bounds are sensitive to choice of $\alpha(x)$, though one should note, that for the same choice of $\alpha(x)$ the second bound is always tighter.

The bounds shown above are general up to choice of $\alpha(x)$, and clarify the motivation for the substitution in \eqref{eq_alphai}.
Careful construction of $\alpha(x)$ along with the selected bound can provide a tradeoff between tightness and complexity.
The next section presents a few simple methods for construction of the function $\alpha(x)$, and their consequences.

\section{Construction of $\alpha(x)$}
Maximization of the right-hand-side of \eqref{eq_mean_value_2}, by taking out the maximum value of $\mathfrak{N}(x)$ outside the integral, is not well-defined.
This is since $\mathfrak{N}(x)$ is an impulse train.
The motivation of this section is to find a replacement for $\mathfrak{N}(x)$ that enables using this maximization technique whilst retaining a meaningful bound.

Let's assume that $g(\sbf{\lambda})$ is monotonically non-increasing\footnotemark{} in $\norm{\sbf{\lambda}}$, and define $N'(x)$ to be a smoothed version of the normalized continuous distance-spectrum, selected such that it satisfies
\footnotetext{This restriction holds for many continuous noise channels, such as AWGN. For other channels, it is possible to define similar heuristics.}
\begin{equation}
\label{eq_alpha1_2}
\int_{0^+}^r N(R) dR
\leq \int_{0^+}^r N'(R) dR
\qquad \forall r \in (0,e^\delta\lambda_{\textnormal{max}}].
\end{equation}
Given the above, $\alpha(x)$ can be defined by expanding \eqref{eq_alphai} as follows:
\begin{align}
\int_{\mathds{R}^n} \mathfrak{N}(\norm{\mbf{x}}) g\left(\frac{\mbf{x}}{e^{\delta}}\right) d\mbf{x}
&= \int_{0^+}^{e^\delta\lambda_{\textnormal{max}}} N(R) g\left(\frac{R}{e^\delta}\right) dR \nonumber \\
&\leq \int_{0^+}^{e^\delta\lambda_{\textnormal{max}}} N'(R) g\left(\frac{R}{e^\delta}\right) dR \nonumber \\
&= \int_{\mathds{R}^n} \alpha(\norm{\mbf{x}}) g\left(\frac{\mbf{x}}{e^{\delta}}\right) d\mbf{x}
\end{align}
with
\begin{equation}
\alpha(x)
\triangleq \left\{
\begin{array}{lr}
\frac{N'(x)}{nV_n x^{n-1}} & : x > 0 \\
0 & : x \leq 0
\end{array}
\right.
\end{equation}
where the equalities follow methods used in \eqref{eq_mean_value_1_proof} together with $g(\sbf{\lambda})$'s circular-symmetry, and the inequality follows from $g(\sbf{\lambda})$ being monotonically non-increasing together with $N'(x)$ obeying \eqref{eq_alpha1_2}.

Define $\{\alpha_i(x)\}_{i\in I}$ to be the set of all functions $\alpha(x)$ such that \eqref{eq_alpha1_2} is satisfied.
Specifically let's define the following two functions:
\begin{itemize}
\item The first function $\alpha^{\textnormal{rng}}(x)$ is defined to be piecewise constant over shells defined by consecutive radii from the normalized distance series $\{\widetilde{\lambda}_j\}_{j=0}^\infty$, (i.e. a shell is defined as $\mathcal{S}_j = \{x: \widetilde{\lambda}_{j-1}<x\leq\widetilde{\lambda}_j\}$). The constant value for shell $j$ is selected such that the spectral mass $\mathcal{N}_j$ is spread evenly over the shell. Formally, this can be expressed as
    \begin{equation}
    \left[\alpha^{\textnormal{rng}}(x)\right]_{x\in\mathcal{S}_j} = \frac{\mathcal{N}_j}{V_n\left(\widetilde{\lambda}_j^n-\widetilde{\lambda}_{j-1}^n\right)}.
    \end{equation}
\item The second function $\alpha^{\textnormal{opt}}(x)$ is selected to be $\alpha_j(x)$ such that the bound \eqref{eq_nrmh1} is tightest; thus by definition
    \begin{equation}
    \label{eq_alpha1_3}
    j = \argmin_{i \in I} \max_{x\leq  e^{\delta}\lambda_{\textnormal{max}}} \alpha_i(x).
    \end{equation}
    As it turns out $\alpha^{\textnormal{opt}}(x)$ is also piecewise constant over shells defined by consecutive radii from the normalized distance series, and can be obtained as the solution to a linear program presented in appendix \ref{app_linear_program_alpha1}.

    One subtlety, that can go by unnoticed about $\alpha^{\textnormal{opt}}(x)$, is its dependence on $\lambda_{\textnormal{max}}$.
    Careful examination reveals that $\alpha^{\textnormal{opt}}(x)$ is constructed from those spectral elements whose corresponding distances are less than or equal to $\lambda_{\textnormal{max}}$.
    This is of relevance when optimization of $\lambda_{\textnormal{max}}$ is dependent on $\alpha^{\textnormal{opt}}(x)$, as is the case in some of the bounds discussed hereafter.
    This technical subtlety can be overcome by construction of a suboptimal version of $\alpha^{\textnormal{opt}}(x)$ consisting of more spectral elements than necessary.
    In many cases both the suboptimal and optimal versions coincide.
    In the remainder of this paper, this technicality is ignored and left for the reader's consideration.
\end{itemize}

Figure \ref{fig_z2_alpha} illustrates $\mathfrak{N}(x)$, $\alpha^{\textnormal{rng}}(x)$, and $\alpha^{\textnormal{opt}}(x)$ for the rectangular lattice $\mathds{Z}^2$.

\begin{figure}[htp]
\center{\includegraphics[width=0.5\textwidth]{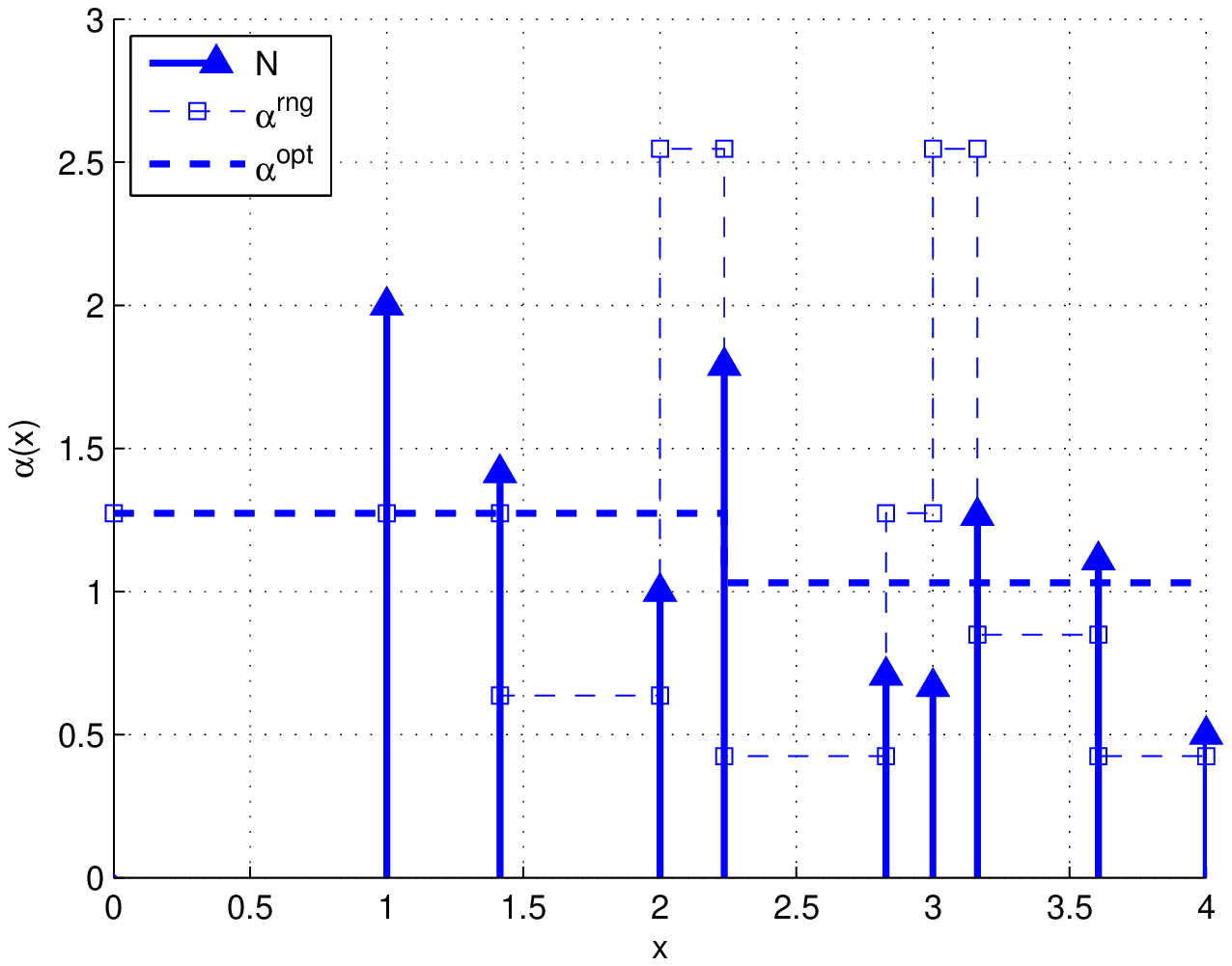}}
\caption{\label{fig_z2_alpha} $\mathfrak{N}(x)$, $\alpha^{\textnormal{rng}}(x)$, and $\alpha^{\textnormal{opt}}(x)$ for the rectangular lattice $\mathds{Z}^2$.}
\end{figure}

\section{A General ML Decoding Upper Bound}
Many tight ML upper bounds originate from a general bounding technique, developed by Gallager \cite{b:GallagerLowDensity}. Gallager's technique has been utilized extensively in literature \cite{j:ShamaiVariationsOnThe,j:YousefiANewUpper,j:TwittoTightenedUpperBounds}.
Similar forms of the general bound, displayed hereafter, have been previously presented in literature \cite{j:HerzbergTechniquesofBounding,j:IngberFiniteDimensional}.
Playing a central role in our analysis, we present it as a theorem.

Before proceeding with the theorem, let us define an Additive Circularly-Symmetric Noise (ACSN) channel as an additive continuous noise channel, whose noise is isotropically distributed and is a non-increasing function of its norm.
\begin{thm}[General ML Upper Bound]
\label{thm_bound_general}
Let $\Lambda$ be an $n$-dimensional lattice, and $f_{\norm{\mbf{z}}}(\rho)$ the \pdf of an ACSN channel's noise vector's norm, then the error probability of an ML decoder is upper bounded by
\begin{align}
\label{eq_bound_general}
P_e(\Lambda)
\leq \min_{r} & \left(\sum_{\sbf{\lambda}\in\Lambda\setminus\{0\}} \int_0^r f_{\norm{\mbf{z}}}(\rho) P_2(\sbf{\lambda},\rho) d\rho \right. \\ \nonumber
& \qquad + \left. \int_r^\infty f_{\norm{\mbf{z}}}(\rho) d\rho \right)
\end{align}
where the pairwise error probability conditioned on $\norm{\mbf{z}}=\rho$ is defined as
\begin{equation}
\label{eq_p2}
P_2(\sbf{\lambda},\rho) = \Pr(\sbf{\lambda}\in \Ball(\mbf{z},\norm{\mbf{z}})|\norm{\mbf{z}}=\rho)
\end{equation}
where $\Ball(\mbf{z},\norm{\mbf{z}})$ is an $n$-dimensional ball of radius $\norm{\mbf{z}}$ centered around $\mbf{z}$.
\end{thm}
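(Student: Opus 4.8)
The plan is to exploit the group structure of the lattice to fix the transmitted point, recast the maximum-likelihood error event as a purely geometric condition, and then layer Gallager's region-splitting idea on top of a union bound. First I would invoke the group symmetry of $\Lambda$ (geometric uniformity): since the lattice is closed under addition and the channel noise is additive, the decoding-error probability does not depend on which point was sent, so I may assume without loss of generality that the all-zero point was transmitted. The received vector then equals the noise $\mbf{z}$. Because the ACSN noise density is isotropic and non-increasing in its norm, the likelihood of a received point given a candidate $\sbf{\lambda}$ is a non-increasing function of $\norm{\mbf{z}-\sbf{\lambda}}$, so the ML rule coincides with nearest-lattice-point decoding. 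An error therefore occurs (up to a measure-zero boundary event) exactly when some competitor $\sbf{\lambda}\neq 0$ is at least as close to $\mbf{z}$ as the origin, i.e. when $\norm{\mbf{z}-\sbf{\lambda}}\leq\norm{\mbf{z}}$, which is equivalent to $\sbf{\lambda}\in\Ball(\mbf{z},\norm{\mbf{z}})$. Thus the error event is $\bigcup_{\sbf{\lambda}\in\Lambda\setminus\{0\}}\{\sbf{\lambda}\in\Ball(\mbf{z},\norm{\mbf{z}})\}$, making the conditional pairwise probability $P_2(\sbf{\lambda},\rho)$ of \eqref{eq_p2} the natural quantity to track.

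Next, for an arbitrary radius $r>0$ I would split the sample space according to whether $\norm{\mbf{z}}\leq r$ or $\norm{\mbf{z}}>r$ and bound each contribution separately. On the outer region I simply drop the error condition, bounding the joint probability of error and $\norm{\mbf{z}}>r$ by $\int_r^\infty f_{\norm{\mbf{z}}}(\rho)\,d\rho$. On the inner region I apply the union bound over the nonzero lattice points to the error event, condition on the noise norm $\norm{\mbf{z}}=\rho$, and integrate against $f_{\norm{\mbf{z}}}(\rho)$ over $\rho\in(0,r]$; interchanging the (nonnegative) sum and integral yields $\sum_{\sbf{\lambda}\in\Lambda\setminus\{0\}}\int_0^r f_{\norm{\mbf{z}}}(\rho)P_2(\sbf{\lambda},\rho)\,d\rho$. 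Adding the two pieces produces an upper bound valid for every choice of $r$, and since the left-hand side is independent of $r$, I may take the minimum over $r$ to arrive at \eqref{eq_bound_general}.

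The routine parts here are the union bound and the nonnegative interchange of summation and integration; the steps that genuinely deserve care are the reduction to the all-zero codeword and the identification of the ML rule with nearest-neighbor decoding, since both rest on the isotropy and monotonicity built into the ACSN definition together with the additive group symmetry of $\Lambda$. The only further subtlety I anticipate is the treatment of ties (points exactly equidistant from the origin and a competitor), but this is a measure-zero event that does not affect the probability and is harmlessly absorbed into the non-strict inequality defining $\Ball(\mbf{z},\norm{\mbf{z}})$.
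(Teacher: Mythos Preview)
Your proposal is correct and follows essentially the same route as the paper's proof: reduce to the zero codeword by lattice symmetry, split on the event $\norm{\mbf{z}}\le r$ versus $\norm{\mbf{z}}>r$, bound the outer part by $\Pr(\norm{\mbf{z}}>r)$, apply a union bound over $\sbf{\lambda}\in\Lambda\setminus\{0\}$ on the inner part conditioned on $\norm{\mbf{z}}=\rho$, and then minimize over $r$. If anything, your handling is slightly cleaner: you work directly with the joint probabilities $\Pr(e,\norm{\mbf{z}}\le r)$ and $\Pr(e,\norm{\mbf{z}}>r)$, whereas the paper first bounds $\Pr(e\mid\norm{\mbf{z}}\le r)$ and then (with a mild notational slip) writes $\int_0^r f_{\norm{\mbf{z}}}(\rho)\Pr(e_{\sbf{\lambda}}\mid\norm{\mbf{z}}=\rho)\,d\rho$ for what is really the joint $\Pr(e_{\sbf{\lambda}},\norm{\mbf{z}}\le r)$; you also make explicit why ML coincides with nearest-neighbor decoding under the ACSN assumptions, which the paper leaves implicit.
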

\begin{proof}
See appendix \ref{app_bound_general}.
\end{proof}
We call the first term of \eqref{eq_bound_general} the Union Bound Term (UBT) and the second term the Sphere Bound Term (SBT) for obvious reasons.

In general \eqref{eq_p2} is difficult to quantify.
One method to overcome this, which is limited for analysis of specific lattices, is by averaging over the MHS ensemble.
New methods for upper bounding \eqref{eq_bound_general} for specific lattices are presented in the next section.

\section{Applications of the General ML Bound}
In this section, the UBT of the ML decoding upper bound \eqref{eq_bound_general} is further bounded using different bounding methods. The resulting applications vary in purpose, simplicity, and exhibit different performance. We present the applications and discuss their differences.

\subsection{MHS}
Application of the MHS theorem from \eqref{eq_mhs}, leads to the random-coding error bound on lattices.
Since it is based on the MHS ensemble average, the random-coding bound proves the existence of a lattice bounded by it, but does not aid in finding such lattice; neither does it provide tools for examining specific lattices.
\begin{thm}[MHS Bound, Theorem 5 of \cite{j:IngberFiniteDimensional}]
\label{thm_bound_mhs}
Let $f_{\norm{\mbf{z}}}(\rho)$ be the \pdf of an ACSN channel's noise vector's norm, then there exists an $n$-dimensional lattice $\Lambda$ of density $\beta$  for which the error probability of an ML decoder is upper bounded by
\begin{equation}
\label{eq_bound_mhs}
P_e(\Lambda)
\leq \beta V_n \int_0^{r^*} f_{\norm{\mbf{z}}}(\rho) \rho^n d\rho + \int_{r^*}^\infty f_{\norm{\mbf{z}}}(\rho) d\rho
\end{equation}
with
\begin{equation}
r^* = (\beta V_n)^{-1/n}.
\end{equation}
\end{thm}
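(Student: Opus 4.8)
The plan is to start from the General ML Upper Bound of Theorem~\ref{thm_bound_general} and apply the Minkowski-Hlawka-Siegel theorem \eqref{eq_mhs} to its Union Bound Term (UBT), while leaving the Sphere Bound Term (SBT) untouched. For a fixed threshold radius $r$, I would interchange the nonnegative sum and the $\rho$-integral in the UBT and write it as $\sum_{\sbf{\lambda}\in\Lambda\setminus\{0\}} g(\sbf{\lambda})$, where $g(\sbf{\lambda}) \triangleq \int_0^r f_{\norm{\mbf{z}}}(\rho)\, P_2(\sbf{\lambda},\rho)\, d\rho$. Because the ACSN noise is isotropic, $P_2(\sbf{\lambda},\rho)$ depends on $\sbf{\lambda}$ only through $\norm{\sbf{\lambda}}$, so $g$ is circularly symmetric; it is bounded by $\int_0^r f_{\norm{\mbf{z}}}(\rho)\,d\rho \le 1$ and, since $P_2(\sbf{\lambda},\rho)$ vanishes once $\norm{\sbf{\lambda}} > 2\rho$, it is compactly supported and Riemann-integrable. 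These are exactly the hypotheses of \eqref{eq_mhs}, which therefore yields a lattice $\Lambda$ of density $\beta$ with $\sum_{\sbf{\lambda}} g(\sbf{\lambda}) \le \beta \int_{\mathds{R}^n} g(\mbf{x})\,d\mbf{x}$.

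The key step is evaluating this integral. Swapping the order of integration gives
\begin{equation}
\beta \int_{\mathds{R}^n} g(\mbf{x})\,d\mbf{x}
= \beta \int_0^r f_{\norm{\mbf{z}}}(\rho) \left( \int_{\mathds{R}^n} P_2(\mbf{x},\rho)\, d\mbf{x} \right) d\rho,
\end{equation}
and I would interpret the inner integral geometrically. Since $P_2(\mbf{x},\rho) = \Pr(\mbf{x}\in\Ball(\mbf{z},\rho)\mid\norm{\mbf{z}}=\rho)$, exchanging the spatial integral with the expectation over the noise direction gives $\int_{\mathds{R}^n} P_2(\mbf{x},\rho)\, d\mbf{x} = \E[\mathrm{Vol}(\Ball(\mbf{z},\rho))\mid\norm{\mbf{z}}=\rho] = V_n \rho^n$, independent of where the ball is centered. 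Substituting back turns the UBT into $\beta V_n \int_0^r f_{\norm{\mbf{z}}}(\rho)\, \rho^n\, d\rho$, which is precisely the first term of \eqref{eq_bound_mhs}.

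It remains to choose $r$. The resulting deterministic expression $\beta V_n \int_0^r f_{\norm{\mbf{z}}}(\rho)\rho^n\,d\rho + \int_r^\infty f_{\norm{\mbf{z}}}(\rho)\,d\rho$ has derivative $f_{\norm{\mbf{z}}}(r)\,(\beta V_n r^n - 1)$ in $r$, which changes sign from negative to positive exactly at $\beta V_n r^n = 1$, i.e. at $r^* = (\beta V_n)^{-1/n}$; hence $r^*$ is the minimizer. Crucially, $r^*$ depends only on $\beta$, $V_n$ and $n$ and not on the lattice, so I would in fact fix $r = r^*$ from the outset before invoking \eqref{eq_mhs}; the lattice whose existence it guarantees then satisfies \eqref{eq_bound_mhs} directly.

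The main obstacle I anticipate is the clean identification $\int_{\mathds{R}^n} P_2(\mbf{x},\rho)\, d\mbf{x} = V_n \rho^n$. Everything downstream is routine, but this geometric averaging step---recognizing that integrating the pairwise-error probability over all possible positions of a lattice point simply recovers the volume $V_n \rho^n$ of a single decoding ball, regardless of the noise direction---is where the isotropy of the ACSN channel is essential and must be justified with care.
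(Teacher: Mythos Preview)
Your proposal is correct and follows essentially the same route as the paper: define $g(\sbf{\lambda})=\int_0^r f_{\norm{\mbf{z}}}(\rho)P_2(\sbf{\lambda},\rho)\,d\rho$, apply \eqref{eq_mhs}, swap integration order, and use the characteristic-function/expectation swap to reduce $\int_{\mathds{R}^n}P_2(\mbf{x},\rho)\,d\mbf{x}$ to the ball volume $V_n\rho^n$, then differentiate in $r$. Your explicit remark that $r^*$ depends only on $(\beta,V_n,n)$, so one may fix $r=r^*$ \emph{before} invoking the existence statement \eqref{eq_mhs}, is a point of rigor the paper leaves implicit but is otherwise the same argument.
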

\begin{proof}
Set $g(\sbf{\lambda})$ as
\begin{equation}
\label{eq_bound_mhs_g}
g(\sbf{\lambda})
= \int_0^r f_{\norm{\mbf{z}}}(\rho) P_2(\sbf{\lambda},\rho) d\rho,
\end{equation}
noting that it is a bounded function of $\sbf{\lambda}$ and continue to bound the UBT from \eqref{eq_bound_general} using \eqref{eq_mhs}.
The remainder of the proof is presented in appendix \ref{app_bound_mhs}.
\end{proof}

\subsection{DMHS}
Application of the DMHS theorem \eqref{eq_nrmh1} using $\alpha(x)=\alpha^{\textnormal{opt}}(x)$ provides a tool for examining specific lattices. The resulting bound is essentially identical to the MHS bound, excluding a scalar multiplier of the UBT. It is noteworthy that this is the best $\alpha(x)$-based bound of this form, since $\alpha^{\textnormal{opt}}(x)$ is optimized with regards to DMHS.
\begin{thm}[DMHS Bound]
\label{thm_bound_nrmh1}
Let a specific $n$-dimensional lattice $\Lambda_0$ of density $\beta$ be transmitted over an ACSN channel with $f_{\norm{\mbf{z}}}(\rho)$ the \pdf of its noise vector's norm, then the error probability of an ML decoder is upper bounded by
\begin{equation}
\label{eq_bound_nrmh1}
P_e(\Lambda_0)
\leq \min_r \left( \alpha \beta V_n \int_0^{r} f_{\norm{\mbf{z}}}(\rho) \rho^n d\rho + \int_{r}^\infty f_{\norm{\mbf{z}}}(\rho) d\rho \right)
\end{equation}
with
\begin{equation}
\label{eq_bound_nrmh1_alpha}
\alpha = \max_{x\leq  e^{\delta}\cdot2r} \alpha^{\textnormal{opt}}(x)
\end{equation}
where $\alpha^{\textnormal{opt}}(x)$ is as defined by \eqref{eq_alpha1_3}.
\end{thm}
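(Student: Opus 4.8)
The plan is to feed the General ML Upper Bound of Theorem~\ref{thm_bound_general} into the deterministic Minkowski--Hlawka machinery of Theorem~\ref{thm_nrmh1}, mirroring the proof of the MHS Bound (Theorem~\ref{thm_bound_mhs}) but replacing the Minkowski--Hlawka average by its non-random counterpart. Concretely, I would start from \eqref{eq_bound_general} and identify the summand of the Union Bound Term as the function
\begin{equation}
g(\sbf{\lambda}) = \int_0^r f_{\norm{\mbf{z}}}(\rho)\, P_2(\sbf{\lambda},\rho)\, d\rho,
\end{equation}
exactly the choice \eqref{eq_bound_mhs_g} used for the MHS Bound. Before invoking Theorem~\ref{thm_nrmh1} I must verify that this $g$ satisfies its hypotheses: it is bounded (since $P_2\le 1$ and $f_{\norm{\mbf{z}}}$ carries unit mass), it is circularly symmetric because the ACSN noise is isotropic, so $P_2(\sbf{\lambda},\rho)$ depends on $\sbf{\lambda}$ only through $\norm{\sbf{\lambda}}$, and it is monotonically non-increasing in $\norm{\sbf{\lambda}}$ --- this last property being exactly what Section~\Rmnum{3} requires in order to construct $\alpha^{\textnormal{opt}}(x)$.

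Next I would apply Theorem~\ref{thm_nrmh1} with $\alpha(x)=\alpha^{\textnormal{opt}}(x)$ to bound the Union Bound Term by $\beta\,[\max_{x\le e^{\delta}\lambda_{\textnormal{max}}}\alpha^{\textnormal{opt}}(x)]\int_{\mathds{R}^n} g(\mbf{x})\,d\mbf{x}$. Two quantities then remain to be pinned down. First, the total mass: interchanging the order of integration and using that, for a fixed $\mbf{z}$ with $\norm{\mbf{z}}=\rho$, the set $\{\sbf{\lambda}: \sbf{\lambda}\in\Ball(\mbf{z},\rho)\}$ is a ball of volume $V_n\rho^n$, yields $\int_{\mathds{R}^n} g(\mbf{x})\,d\mbf{x} = V_n\int_0^r f_{\norm{\mbf{z}}}(\rho)\,\rho^n\,d\rho$, precisely the factor appearing in \eqref{eq_bound_nrmh1}; this is the same computation that produces the MHS Bound in Theorem~\ref{thm_bound_mhs}. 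Second, the effective support radius $\lambda_{\textnormal{max}}$, which I claim equals $2r$.

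The support argument is the crux of the proof. From \eqref{eq_p2}, the event $\sbf{\lambda}\in\Ball(\mbf{z},\rho)$ is $\norm{\sbf{\lambda}-\mbf{z}}\le\rho$, and over $\mbf{z}$ uniform on the sphere $\norm{\mbf{z}}=\rho$ this has positive probability only if $\min_{\norm{\mbf{z}}=\rho}\norm{\sbf{\lambda}-\mbf{z}} = \abs{\norm{\sbf{\lambda}}-\rho}\le\rho$, i.e.\ only if $\norm{\sbf{\lambda}}\le 2\rho$. Since the inner $\rho$-integral defining $g$ is truncated at $r$, the integrand can be nonzero only for $\rho\le r$, whence $g(\sbf{\lambda})=0$ whenever $\norm{\sbf{\lambda}}>2r$. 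Thus $\lambda_{\textnormal{max}}=2r$, the maximization window of Theorem~\ref{thm_nrmh1} becomes $x\le e^{\delta}\cdot 2r$, which is exactly \eqref{eq_bound_nrmh1_alpha}, and collecting the bounded Union Bound Term with the unchanged Sphere Bound Term $\int_r^\infty f_{\norm{\mbf{z}}}(\rho)\,d\rho$ and minimizing over $r$ delivers \eqref{eq_bound_nrmh1}.

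I expect the main obstacle to be this geometric support step fixing $\lambda_{\textnormal{max}}=2r$, together with its interaction with the $\lambda_{\textnormal{max}}$-dependence of $\alpha^{\textnormal{opt}}(x)$ flagged at the end of Section~\Rmnum{3}. Because the optimal $r$ from the outer minimization feeds back into the window $x\le e^{\delta}\cdot 2r$ over which $\alpha^{\textnormal{opt}}$ is both constructed and maximized, some care is needed to keep the minimization over $r$ and the definition \eqref{eq_alpha1_3} of $\alpha^{\textnormal{opt}}$ mutually consistent; the suboptimal over-inclusive construction suggested there is the natural way to sidestep the circularity. The monotonicity of $g$ in $\norm{\sbf{\lambda}}$, which follows from $P_2(\sbf{\lambda},\rho)$ shrinking as $\norm{\sbf{\lambda}}$ grows, is routine but must be stated explicitly since it underwrites the legitimacy of using $\alpha^{\textnormal{opt}}$.
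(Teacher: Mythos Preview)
Your proposal is correct and follows essentially the same route as the paper's own proof: set $g(\sbf{\lambda})$ as in \eqref{eq_bound_mhs_g}, observe that $\lambda_{\textnormal{max}}=2r$, and then reuse the MHS computation of Appendix~\ref{app_bound_mhs} with $\beta$ replaced by $\alpha\beta$. Your write-up is simply more explicit than the paper's two-line sketch, spelling out the geometric reason for $\lambda_{\textnormal{max}}=2r$ and the monotonicity and circular-symmetry checks that justify invoking $\alpha^{\textnormal{opt}}$.
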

\begin{proof}
Set $g(\sbf{\lambda})$ as in \eqref{eq_bound_mhs_g}, noting that it is bounded by $\lambda_{\textnormal{max}}=2r$. The remainder is identical to the proof of theorem \ref{thm_bound_mhs} replacing $\beta$ with $\alpha\beta$.
\end{proof}

Optimization of $r$ can be performed in the following manner.
Since $\alpha^{opt}(x)$ is a monotonically non-increasing function of $x$, optimization of $r$ is possible using an iterative numerical algorithm.
In the first iteration, set $r = (\beta V_n)^{-1/n}$ and calculate $\alpha$ according to \eqref{eq_bound_nrmh1_alpha}.
In each additional iteration, set $r = (\alpha\beta V_n)^{-1/n}$ and recalculate $\alpha$. The algorithm is terminated on the first iteration when $\alpha$ is unchanged.

\subsection{eDMHS}
Rather than maximizing the UBT using a single scalar factor (as was done in the DMHS), the eDMHS splits up the UBT integral to several regions with boundaries defined by $\{\lambda_j\}_{j=0}^\infty$.
Maximization of each resulting region by its own scalar, results in much tighter, yet more complex bound.
This is typically preferred for error bounding in finite dimension lattices.
For asymptotical analysis, the eDMHS would typically require an increasing (with the dimension) number of spectral elements, while the DMHS would still only require one, making it the favorable.
We choose $\alpha(x)=\alpha^{\textnormal{opt}}(x)$, rather than optimizing $\alpha(x)$ for this case.
The reason is that although this bound is tighter for the finite dimension case, it is considerably more complex than a competing bound (for the finite dimension case) presented in the next subsection.
The main motivation for showing this bound is as a generalization of DMHS.
\begin{thm}[eDMHS Bound]
\label{thm_bound_nrmh2}
Let a specific $n$-dimensional lattice $\Lambda_0$ of density $\beta$ be transmitted over an ACSN channel with $f_{\norm{\mbf{z}}}(\rho)$ the \pdf of its noise vector's norm, then the error probability of an ML decoder is upper bounded by
\begin{align}
\label{eq_bound_nrmh2}
P_e(\Lambda_0)
\leq \min_r & \left( \beta \sum_{j=1}^M \alpha_j \int_{\lambda_j/2}^{r} f_{\norm{\mbf{z}}}(\rho) h_j(\rho) d\rho \right. \nonumber \\
& \qquad + \left. \int_{r}^\infty f_{\norm{\mbf{z}}}(\rho) d\rho \right)
\end{align}
with
\begin{align}
&\alpha_j = \max_{\mathcal{S}_j} \alpha^{\textnormal{opt}}(x) = \alpha^{\textnormal{opt}}(e^{\delta}\lambda_j) \\
&h_j(\rho) = \int_{\sbf{\mathcal{S}}_j} \sigma\{\mbf{x}\in \Ball(\mbf{z},\norm{\mbf{z}})| \norm{\mbf{z}}=\rho \} d\mbf{x} \\
&\mathcal{S}_j = \{x: x\geq 0,  e^{\delta}\lambda_{j-1}<x\leq  e^{\delta}\lambda_j\} \\
&\sbf{\mathcal{S}}_j = \{\mbf{x}: \mbf{x}\in\mathds{R}^n, \lambda_{j-1}<\norm{\mbf{x}}\leq \lambda_j\}
\end{align}
where $\alpha^{\textnormal{opt}}(x)$ is as defined in \eqref{eq_alpha1_3}, $\sigma\{\mbf{x}\in \Ball(\mbf{z},\norm{\mbf{z}})\}$ is the characteristic function of $\Ball(\mbf{z},\norm{\mbf{z}})$, $\{\lambda_j\}_{j=0}^\infty$ is the previously defined distance series of $\Lambda_0$, and $M$ is the maximal index $j$ such that $\lambda_j\leq 2r$.
\end{thm}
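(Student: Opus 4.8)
The plan is to mirror the proofs of Theorems~\ref{thm_bound_mhs} and~\ref{thm_bound_nrmh1}, replacing the Minkowski--Hlawka step with the eDMHS theorem (Theorem~\ref{thm_nrmh2}). I would begin from the General ML Upper Bound \eqref{eq_bound_general} and set, as in \eqref{eq_bound_mhs_g},
\[
g(\sbf{\lambda}) = \int_0^r f_{\norm{\mbf{z}}}(\rho) P_2(\sbf{\lambda},\rho)\, d\rho .
\]
This $g$ is bounded and circularly-symmetric, by isotropy of the ACSN noise. Moreover $P_2(\sbf{\lambda},\rho)=0$ whenever $\rho<\norm{\sbf{\lambda}}/2$, since the sphere of radius $\rho$ about the origin cannot meet $\Ball(\mbf{z},\rho)$ near $\sbf{\lambda}$ unless $\norm{\sbf{\lambda}}\le 2\rho$; hence $g$ vanishes for $\norm{\sbf{\lambda}}>2r$ and $\lambda_{\textnormal{max}}=2r$. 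The Sphere Bound Term $\int_r^\infty f_{\norm{\mbf{z}}}(\rho)\, d\rho$ passes through unchanged, so the whole task reduces to bounding the Union Bound Term $\sum_{\sbf{\lambda}\in\Lambda_0\setminus\{0\}} g(\sbf{\lambda})$.

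Next I would apply Theorem~\ref{thm_nrmh2} with $\alpha(x)=\alpha^{\textnormal{opt}}(x)$ and with the shell-defining radii $\{R_j\}$ taken to equal the distance series $\{\lambda_j\}$, so that $\mathcal{R}_j=\mathcal{S}_j$ and $\sbf{\mathcal{R}}_j=\sbf{\mathcal{S}}_j$; since the minimization over $\{R_j\}$ in \eqref{eq_nrmh2} only tightens the bound, this particular choice is admissible. The per-shell factor simplifies immediately: because the normalized series obeys $\widetilde{\lambda}_j=e^\delta\lambda_j$, the intervals $\mathcal{S}_j$ are exactly those on which $\alpha^{\textnormal{opt}}(x)$ is piecewise constant, so $\max_{\mathcal{S}_j}\alpha^{\textnormal{opt}}(x)$ equals its constant value there, namely $\alpha^{\textnormal{opt}}(e^\delta\lambda_j)=\alpha_j$.

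The substantive step is to rewrite $\int_{\sbf{\mathcal{S}}_j} g(\mbf{x})\, d\mbf{x}$ as the inner integral of \eqref{eq_bound_nrmh2}. Substituting the definition of $g$ and exchanging the order of integration (Fubini, valid by nonnegativity and boundedness) gives $\int f_{\norm{\mbf{z}}}(\rho)\bigl[\int_{\sbf{\mathcal{S}}_j} P_2(\mbf{x},\rho)\, d\mbf{x}\bigr]\, d\rho$. Here the probabilistic pairwise term must be converted into the deterministic weight $h_j(\rho)$: expressing $P_2(\mbf{x},\rho)=\Pr(\mbf{x}\in\Ball(\mbf{z},\norm{\mbf{z}})\mid\norm{\mbf{z}}=\rho)$ and moving the spatial integral inside the conditional expectation, the isotropy of $\mbf{z}$ together with the rotational symmetry of the shell $\sbf{\mathcal{S}}_j$ renders $\int_{\sbf{\mathcal{S}}_j}\sigma\{\mbf{x}\in\Ball(\mbf{z},\norm{\mbf{z}})\}\, d\mbf{x}$ independent of the direction of $\mbf{z}$, so it equals $h_j(\rho)$ for every realization with $\norm{\mbf{z}}=\rho$. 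This identifies $\int_{\sbf{\mathcal{S}}_j} g(\mbf{x})\, d\mbf{x}=\int f_{\norm{\mbf{z}}}(\rho) h_j(\rho)\, d\rho$ and, after multiplying by $\beta\alpha_j$ and summing over $j$, reproduces the Union Bound Term of \eqref{eq_bound_nrmh2}.

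The main obstacle I anticipate is fixing the integration limits and the summation range correctly. The upper limit $r$ is inherited from the support of $g$. The lower limit is dictated by the geometry of the pairwise event: $h_j(\rho)$ is zero until $\rho$ reaches the threshold at which the shell $\sbf{\mathcal{S}}_j$ first intersects $\Ball(\mbf{z},\rho)$, i.e. a fixed fraction of the shell radius, which is the cutoff displayed in \eqref{eq_bound_nrmh2}; pinning this threshold down exactly, and confirming that the shells $\{\sbf{\mathcal{S}}_j\}_{j=1}^M$ (with $M$ the largest index satisfying $\lambda_M\le 2r$) cover the entire support $\norm{\mbf{x}}\le 2r=\lambda_{\textnormal{max}}$ of $g$ with no uncovered annulus, is the delicate bookkeeping. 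Once the limits are settled, reassembling the bounded Union Bound Term with the untouched Sphere Bound Term and minimizing over $r$ yields \eqref{eq_bound_nrmh2}.
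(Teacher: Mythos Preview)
Your proposal is correct and follows essentially the same route as the paper: start from \eqref{eq_bound_general} with $g$ as in \eqref{eq_bound_mhs_g} (so $\lambda_{\textnormal{max}}=2r$), apply Theorem~\ref{thm_nrmh2} with $\alpha^{\textnormal{opt}}$ and shells aligned to the distance series so that piecewise constancy collapses \eqref{eq_nrmh2} to $\beta\sum_j \alpha_j \int_{\sbf{\mathcal{S}}_j} g$, then swap integrals and use isotropy to turn $\int_{\sbf{\mathcal{S}}_j} P_2(\mbf{x},\rho)\,d\mbf{x}$ into $h_j(\rho)$, exactly as in Appendix~\ref{app_bound_nrmh2}. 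Your bookkeeping worry about an uncovered annulus $\lambda_M<\norm{\mbf{x}}\le 2r$ is handled implicitly in the paper by the fact (Appendix~\ref{app_linear_program_alpha1}) that $\alpha^{\textnormal{opt}}$ is constructed to vanish on $(\widetilde{\lambda}_M,e^\delta\lambda_{\textnormal{max}}]$, so that shell contributes nothing.
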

\begin{proof}
We set $g(\sbf{\lambda})$ as in \eqref{eq_bound_mhs_g} noting that it is bounded by $\lambda_{\textnormal{max}}=2r$.
We continue by remembering that $\alpha^{\textnormal{opt}}(x)$ is piecewise constant in the shells $\mathcal{S}_j$, and therefore \eqref{eq_nrmh2} collapses to
\begin{equation}
\label{eq_nrmh2_alpha1}
\sum_{\sbf{\lambda}\in\Lambda_0/\{0\}} g(\sbf{\lambda})
\leq \beta \sum_{j=0}^M \alpha_j \int_{\sbf{\mathcal{S}}_j} g(\mbf{x}) d\mbf{x}.
\end{equation}
For the remainder we continue in a similar manner to the proof of theorem \ref{thm_bound_mhs} by upper bounding the UBT from \eqref{eq_bound_general} using \eqref{eq_nrmh2_alpha1}. See appendix \ref{app_bound_nrmh2}.
\end{proof}
A geometrical interpretation of $h_j(\rho)$ is presented in figure \ref{fig_h_j}.

\input{h_j.tpx}

\subsection{Sphere Upper Bound (SUB)}
Another bound involving several elements of the spectrum can be constructed by directly bounding the conditional pairwise error probability $P_2(\mbf{x},\rho)$.
This bound is considerably more complex than the DMHS, but is potentially much tighter for the finite dimension case.
When the channel is restricted to AWGN, the resulting bound is similar in performance to the SUB of \cite{j:HerzbergTechniquesofBounding}, hence the name.
The bounding technique for $P_2(\mbf{x},\rho)$, presented in the following lemma, is based on \cite{j:LomnitzCommunicationOver}.
\begin{lem}[Appendix D of \cite{j:LomnitzCommunicationOver}]
\label{lem_prob_ball}
Let $\mbf{x}$ be a vector point in $n$-space with norm $\norm{\mbf{x}}\leq2\rho$, $\mbf{z}$ an isotropically distributed $n$-dimensional random vector, and $\rho$ a real number then
\begin{equation}
\label{eq_prob_ball}
\Pr(\mbf{x}\in \Ball(\mbf{z},\norm{\mbf{z}})|\norm{\mbf{z}}=\rho)
\leq \left( 1- \left( \frac{\norm{\mbf{x}}}{2\rho} \right)^2 \right)^{\frac{n-1}{2}}
\end{equation}
\end{lem}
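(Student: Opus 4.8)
The plan is to reduce the conditional probability to the normalized surface area of a spherical cap, and then bound that area by the stated power law using an orthogonal projection followed by a rescaling.

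First I would make the geometry explicit. Conditioned on $\norm{\mbf{z}}=\rho$, isotropy makes $\mbf{z}$ uniformly distributed on the sphere of radius $\rho$ centered at the origin. The event $\mbf{x}\in\Ball(\mbf{z},\norm{\mbf{z}})$ is $\norm{\mbf{x}-\mbf{z}}\le\rho$, and expanding the norm while using $\norm{\mbf{z}}=\rho$ turns this into the linear condition $\langle\mbf{x},\mbf{z}\rangle\ge\norm{\mbf{x}}^2/2$. Writing $u=\mbf{z}/\rho$ (uniform on the unit sphere $S^{n-1}$), $v=\mbf{x}/\norm{\mbf{x}}$, and $c=\norm{\mbf{x}}/(2\rho)$, the event becomes $\langle v,u\rangle\ge c$, i.e. $u$ lies in the spherical cap $C=\set{u\in S^{n-1}:\langle v,u\rangle\ge c}$. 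The hypothesis $\norm{\mbf{x}}\le2\rho$ guarantees $0\le c\le1$, so the cap sits inside the hemisphere $\langle v,u\rangle\ge0$ and the desired probability equals $\abs{C}/\abs{S^{n-1}}$.

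Next I would compute both areas with the same parametrization. Taking $v$ as the first coordinate axis, the cap is the graph $u_1=\sqrt{1-\norm{u'}^2}$ over the $(n-1)$-ball $u'\in B(r_0)$ of radius $r_0=\sqrt{1-c^2}$, whose surface element is $du'/\sqrt{1-\norm{u'}^2}$; the full sphere area is twice the same integral taken over the unit ball $B(1)$. Substituting $u'=r_0 w$ in the cap integral extracts a factor $r_0^{n-1}$ and replaces the integrand by $1/\sqrt{1-r_0^2\norm{w}^2}$ over $B(1)$. Since $r_0\le1$ gives $1-r_0^2\norm{w}^2\ge1-\norm{w}^2$ pointwise, the rescaled cap integral is bounded by the hemisphere integral, yielding $\abs{C}/\abs{S^{n-1}}\le \tfrac12 r_0^{\,n-1}=\tfrac12(1-c^2)^{(n-1)/2}$; discarding the harmless $\tfrac12$ gives the claimed bound with $c=\norm{\mbf{x}}/(2\rho)$.

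The conceptual crux, and the main thing to get right, is the reduction to a cap together with the rescaling inequality. I expect the principal pitfall to be the normalization: a naive pointwise bound on the cap integrand fails because it ignores how $\abs{S^{n-1}}$ shrinks with $n$, whereas the substitution $u'=r_0 w$ compares the cap against the hemisphere on a common domain and lets $r_0\le1$ do all the work. The remaining points are routine to verify, namely the surface-area element $du'/\sqrt{1-\norm{u'}^2}$ and the observation that $c\ge0$ keeps the cap within a single hemisphere so the graph parametrization is legitimate.
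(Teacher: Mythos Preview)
Your proof is correct, but it follows a genuinely different route from the paper's. Both proofs begin by reducing the event to the half-space condition $\langle v,u\rangle\ge c$ with $c=\norm{\mbf{x}}/(2\rho)$, i.e.\ to the relative area of a spherical cap. From there the paper takes a probabilistic path: it represents the uniform direction via an i.i.d.\ standard Gaussian vector, rewrites the cap condition as $z_1^2\ge\frac{c^2}{1-c^2}\sum_{i\ge2}z_i^2$, applies the Chernoff-type bound $Q(x)\le e^{-x^2/2}$ conditionally on $z_{2:n}$, and then evaluates the Gaussian moment generating function to obtain $(1-c^2)^{(n-1)/2}$. You instead stay entirely geometric: parametrize the cap as a graph over the $(n-1)$-ball of radius $r_0=\sqrt{1-c^2}$, rescale by $r_0$ to compare against the hemisphere integral on a common domain, and use the pointwise inequality $1-r_0^2\norm{w}^2\ge1-\norm{w}^2$. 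Your argument is more elementary (no Gaussian machinery or MGF computation), and it actually produces the sharper constant $\tfrac12(1-c^2)^{(n-1)/2}$ before you discard the $\tfrac12$; the paper's Chernoff step absorbs that factor of $\tfrac12$ into the exponential bound on $Q$. The paper's route, on the other hand, generalizes more readily if one wanted to trade off constants against exponents via different Chernoff parameters.
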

\begin{proof}
See appendix \ref{app_prob_ball}.
\end{proof}
The above lemma leads directly to the following theorem.
\begin{thm}[SUB]
\label{thm_bound_sub}
Let a specific $n$-dimensional lattice $\Lambda_0$ of density $\beta$ be transmitted over an ACSN channel with $f_{\norm{\mbf{z}}}(\rho)$ the \pdf of its noise vector's norm, then the error probability of an ML decoder is upper bounded by
\begin{align}
\label{eq_bound_sub}
P_e(\Lambda_0)
\leq \min_r & \left( \sum_{j=1}^M \mathcal{N}_j \int_{\lambda_j/2}^{r} f_{\norm{\mbf{z}}}(\rho) \left( 1- \left( \frac{\lambda_j}{2\rho} \right)^2 \right)^{\frac{n-1}{2}} d\rho \right.  \nonumber \\
& \qquad + \left. \int_{r}^\infty f_{\norm{\mbf{z}}}(\rho) d\rho \right)
\end{align}
where $\{\lambda_j\}_{j=1}^\infty$ and $\{\mathcal{N}_j\}_{j=1}^\infty$ are the previously defined distance series and spectrum of $\Lambda_0$ respectively, and $M$ is the maximal index $j$ such that $\lambda_j\leq 2r$.
\end{thm}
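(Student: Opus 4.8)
The plan is to start from the General ML Upper Bound of Theorem~\ref{thm_bound_general} and leave its Sphere Bound Term $\int_r^\infty f_{\norm{\mbf{z}}}(\rho)\,d\rho$ completely untouched, so that all of the work is concentrated on the Union Bound Term $\sum_{\sbf{\lambda}\in\Lambda_0\setminus\{0\}}\int_0^r f_{\norm{\mbf{z}}}(\rho) P_2(\sbf{\lambda},\rho)\,d\rho$. Unlike the DMHS and eDMHS applications, here I would not route the UBT through the deterministic Minkowski--Hlawka machinery at all; instead I would bound the conditional pairwise error probability $P_2(\sbf{\lambda},\rho)$ directly and pointwise in $\sbf{\lambda}$ using Lemma~\ref{lem_prob_ball}, and then merely collect the surviving terms shell by shell according to the distance spectrum.

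The first step is to record the geometric support of $P_2$. By \eqref{eq_p2}, $P_2(\sbf{\lambda},\rho)$ is the probability that $\sbf{\lambda}\in\Ball(\mbf{z},\norm{\mbf{z}})$ given $\norm{\mbf{z}}=\rho$; if $\norm{\sbf{\lambda}}>2\rho$ then the triangle inequality gives $\norm{\sbf{\lambda}-\mbf{z}}\geq\norm{\sbf{\lambda}}-\rho>\rho$, so $\sbf{\lambda}$ cannot lie in the ball and $P_2(\sbf{\lambda},\rho)=0$. Hence the inner integral $\int_0^r$ may be replaced by $\int_{\norm{\sbf{\lambda}}/2}^r$, and a given $\sbf{\lambda}$ contributes only when $\norm{\sbf{\lambda}}\leq 2r$. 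The second step is that on the range $\rho\geq\norm{\sbf{\lambda}}/2$ the hypothesis $\norm{\sbf{\lambda}}\leq 2\rho$ of Lemma~\ref{lem_prob_ball} is satisfied, so I would substitute the bound \eqref{eq_prob_ball}, namely $P_2(\sbf{\lambda},\rho)\leq\left(1-(\norm{\sbf{\lambda}}/(2\rho))^2\right)^{(n-1)/2}$. The third step is that the resulting integrand depends on $\sbf{\lambda}$ only through its norm $\norm{\sbf{\lambda}}$, so I would group the lattice points into shells of common norm: the $\mathcal{N}_j$ points at distance $\lambda_j$ each contribute an identical integral, and the sum $\sum_{\sbf{\lambda}\in\Lambda_0\setminus\{0\}}$ collapses into $\sum_j \mathcal{N}_j$ with $\norm{\sbf{\lambda}}$ replaced by $\lambda_j$ and lower limit $\lambda_j/2$. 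Truncating the sum at $M$, the largest index with $\lambda_j\leq 2r$ (beyond which the interval $[\lambda_j/2,r]$ is empty), and reinstating the Sphere Bound Term together with the outer minimization over $r$, yields \eqref{eq_bound_sub}.

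Given Theorem~\ref{thm_bound_general} and Lemma~\ref{lem_prob_ball}, the conceptual content of the derivation is light and amounts to careful bookkeeping rather than to any difficult estimate. The steps that demand attention, as opposed to effort, are the vanishing-support argument, which simultaneously trims each integral to $[\lambda_j/2,r]$ and fixes the cutoff $M$, and the verification that the pointwise bound is isotropic so that the per-shell multiplicity $\mathcal{N}_j$ factors out cleanly without any rotation argument. The only genuinely nontrivial ingredient is Lemma~\ref{lem_prob_ball} itself, whose proof is deferred to the appendix and which I am entitled to assume here; with it in hand, no optimization of an auxiliary function $\alpha(x)$ is required, which is exactly what separates this bound from the $\alpha$-based applications and explains why it can be both the simplest to state and the tightest of the finite-dimensional bounds.
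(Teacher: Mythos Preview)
Your proposal is correct and follows essentially the same approach as the paper: start from the General ML Upper Bound, use the support observation $P_2(\sbf{\lambda},\rho)=0$ for $\rho<\norm{\sbf{\lambda}}/2$ to trim the integrals, apply Lemma~\ref{lem_prob_ball} pointwise, and then collapse the sum over $\Lambda_0\setminus\{0\}$ shell-by-shell via the distance spectrum. The paper's appendix proof is exactly this chain of steps, so there is no substantive difference.
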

\begin{proof}
Bound the UBT of \eqref{eq_bound_general} directly, using \eqref{eq_prob_ball}. See appendix \ref{app_bound_sub}.
\end{proof}

Optimization of $r$ can be performed in the following manner.
We begin by analyzing the function $f(\rho)=\left( 1- \left( \frac{\lambda_j}{2\rho} \right)^2 \right)^{\frac{n-1}{2}}$.
It is simple to verify that this function is positive and strictly increasing in the domain $\{\rho:\rho\geq\lambda_j/2\}$.
Since the UBT of \eqref{eq_bound_sub} is a positive sum of such functions, it is positive and monotonically nondecreasing in $r$.
Since additionally the SBT of \eqref{eq_bound_sub} is always positive, it suffices to search for the optimal $r$ between $\lambda_1/2$ and an $r_{\textnormal{max}}$, where $r_{\textnormal{max}}$ is defined such that the UBT is greater than or equal to $1$.

We continue by calculating $M_{\textnormal{max}}$ that corresponds to $r_{\textnormal{max}}$. By definition, each selection of $M$ corresponds to a domain $\{r:\lambda_M\leq 2r<\lambda_{M+1}\}$.
Instead of searching for the optimal $r$ over the whole domain $\{r:\lambda_1/2\leq r<r_{\textnormal{max}}\}$, we search over all sub-domains corresponding to $1\leq M\leq M_{\textnormal{max}}$.
When $M$ is constant, independent of $r$, the optimal $r$ is found by equating the differential of \eqref{eq_bound_sub} to $0$, in the domain $\{r:\lambda_M\leq 2r<\lambda_{M+1}\}$. Equating the differential to $0$ results in the following condition on $r$:
\begin{equation}
\sum_{j=1}^M \mathcal{N}_j  \left( 1- \left( \frac{\lambda_j}{2r} \right)^2 \right)^{\frac{n-1}{2}}-1 = 0.
\end{equation}
The function on the left of the above condition is monotonically nondecreasing (as shown previously), so an optimal $r$ exists in $\{r:\lambda_M\leq 2r<\lambda_{M+1}\}$ iff its values on the domain edges are of opposite signs. If no optimum exists in the domain, it could exist on one of the domain edges.

The optimization algorithm proceeds as follows: set $M=1$ and examine the differential at the domain edges $\lambda_M/2$ and $\lambda_{M+1}/2$.
If the edges are of opposite signs, find the exact $r$ that zeros the differential in the domain and store it as $r_{\textnormal{zc}}^M$.
Otherwise set $r_{\textnormal{zc}}^M=\emptyset$, advance $M$ by $1$ and repeat the process.
The process is terminated at $M=M_{\textnormal{max}}$.
When done, evaluate \eqref{eq_bound_sub} at $r=r_{\textnormal{zc}}^1,\dots,r_{\textnormal{zc}}^M,\lambda_1/2,\dots,\lambda_{M_{\textnormal{max}}}/2$ and select the $r$ that minimizes.

We conclude this section with an example presented in figure \ref{fig_leech}, which illustrates the effectiveness of the new bounds for finite dimension.
The error probability of the Leech\footnotemark{} lattice $\Lambda_{24}$ is upper bounded by DMHS \eqref{eq_bound_nrmh1} and SUB \eqref{eq_bound_sub}.
The ordinary Union Bound (UB), the MHS bound for dimension $24$ \eqref{eq_bound_mhs}, and the Sphere Lower Bound (SLB) of \cite{j:TarokhUniversalBound} are added for reference.
The spectral data is taken from \cite{b:ConwaySpherePackings}.
The bounds are calculated for an AWGN channel with noise variance $\sigma^2$.
\footnotetext{The Leech lattice is the densest lattice packing in 24 dimensions.}

\begin{figure}[htp]
\center{\includegraphics[width=0.5\textwidth]{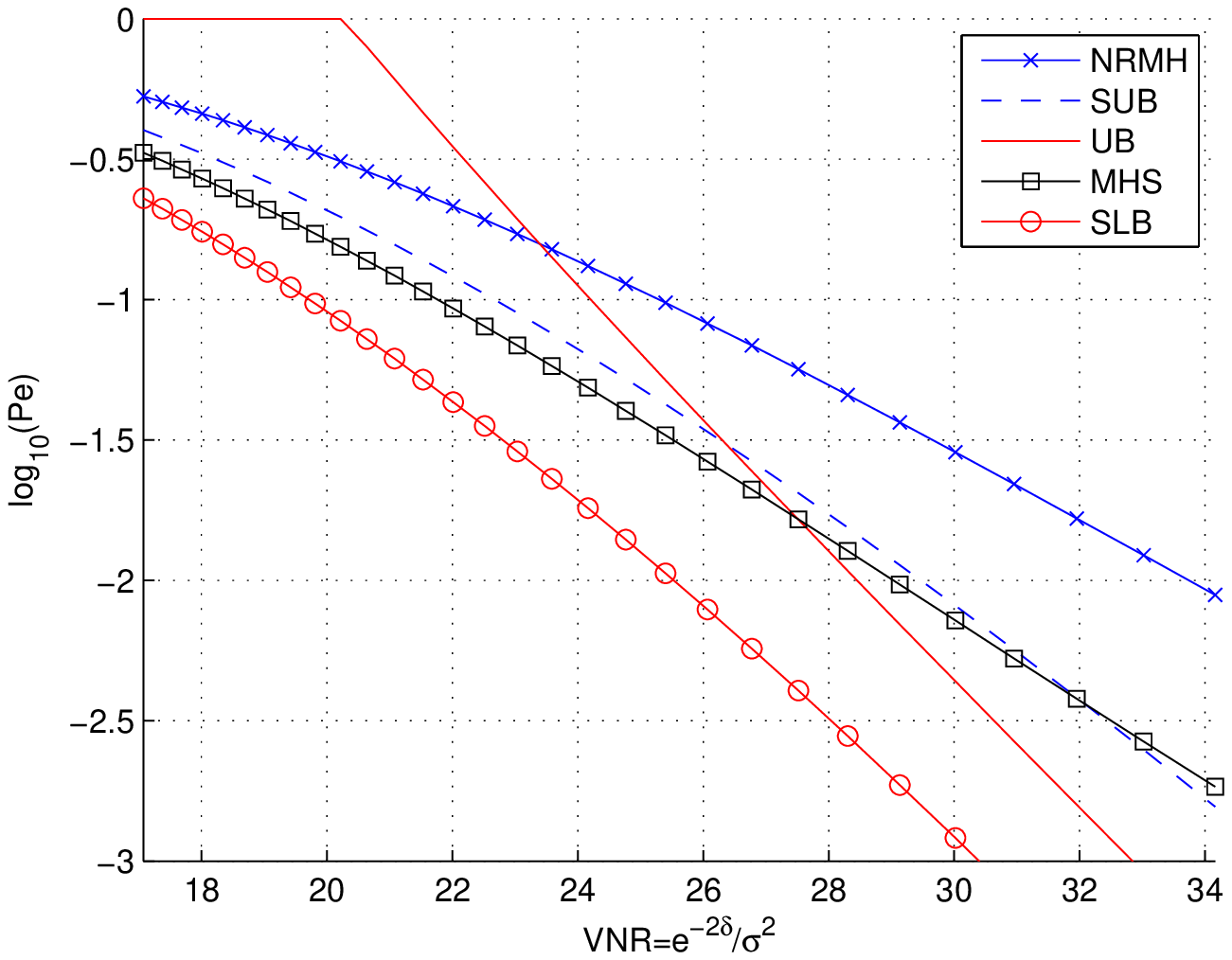}}
\caption{\label{fig_leech} A comparison of DMHS and SUB for the Leech lattice. The UB, MHS and SLB are added for reference. The graph shows the error probability as a function of the Volume-to-Noise Ratio (VNR) for rates $\delta^*<\delta<\delta_{cr}$.}
\end{figure}

\section{Error Exponents for the AWGN Channel}
The previous section presented three new spectrum based upper bounds for specific lattices.
As stated there, only the DMHS bound seems suitable for asymptotical analysis.
This section uses the DMHS bound to construct an error-exponent for specific lattices over the AWGN channel.
Although this case is of prominent importance, it is important to keep in mind that it is only one extension of the DMHS bound.
In general, DMHS bound extensions are applicable wherever MHS bound extensions are.

When the channel is AWGN with noise variance $\sigma^2$, the upper error bound on the ML decoding of a ``good''\footnotemark{} lattice from the MHS ensemble \eqref{eq_bound_mhs} can be expressed in the following exponential form \cite{j:PoltyrevOnCoding}, \cite{j:IngberFiniteDimensional}
\footnotetext{Define a ``good'' lattice, from an ensemble, as one that is upper bounded by the ensemble's average.}
\begin{equation}
\label{eq_bound_mhs_exp}
P_e(\Lambda)
\leq e^{-n(E_r(\delta)+o(1))}
\end{equation}
with
\begin{equation}
E_r(\delta)
= \left\{
\begin{array}{ll}
(\delta^*-\delta) + \log{\frac{e}{4}},                                         & \delta\leq\delta_{cr} \\
\frac{e^{2(\delta^*-\delta)}-2(\delta^*-\delta)-1}{2},
& \delta_{cr}\leq\delta<\delta^* \\
0,                                                                             & \delta\geq\delta^*
\end{array} \right.
\end{equation}
\begin{align}
\delta^* &= \frac{1}{2}\log{\frac{1}{2\pi e\sigma^2}} \\
\delta_{cr} &= \frac{1}{2}\log{\frac{1}{4\pi e\sigma^2}}
\end{align}
where $o(1)$ goes to zero asymptotically with $n$.

This error-exponent can be directly deduced from the MHS bound \eqref{eq_bound_mhs}.
By applying similar methods to the DMHS bound \eqref{eq_bound_nrmh1}, it is possible to construct an error-exponent for a specific lattice \textit{sequence} based on its distance-spectrum.
\begin{thm}[Non-Random Coding Error Exponent]
Let $\Lambda_0[n]$ be a specific lattice \textit{sequence} transmitted over an AWGN channel with noise variance $\sigma^2$, then the error probability of an ML decoder is upper bounded by
\begin{equation}
\label{eq_bound_nrmh1_exp}
P_e(\Lambda_0[n])
\leq e^{-n(E_r(\delta+\nu[n])+o(1))}
\end{equation}
with
\begin{equation}
\nu[n] \triangleq \frac{1}{n}\log{\alpha[n]}.
\end{equation}
\end{thm}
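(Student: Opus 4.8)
The plan is to build directly on the structural identity, already noted after Theorem~\ref{thm_bound_nrmh1}, between the DMHS bound \eqref{eq_bound_nrmh1} and the MHS bound \eqref{eq_bound_mhs}: the two are the same expression, save that the density $\beta$ multiplying the UBT is replaced by the effective density $\alpha\beta$. The exponential form \eqref{eq_bound_mhs_exp}, together with the piecewise expression for $E_r(\delta)$, is obtained purely by an asymptotic (Laplace-type) evaluation of the two integrals in \eqref{eq_bound_mhs}, viewed as functions of the density $\beta = e^{n\delta}$. Consequently, the whole derivation should carry over once $\alpha\beta$ is substituted for $\beta$, and the only real task is to identify the NLD corresponding to the effective density and to verify that the substitution preserves the $o(1)$ control.

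First I would rewrite the effective density as $\alpha[n]\beta = \alpha[n]\,e^{n\delta} = e^{n(\delta+\nu[n])}$, with $\nu[n] = \frac{1}{n}\log\alpha[n]$ exactly as in the statement. Thus the DMHS bound for the $n$-th lattice coincides with the MHS bound for a fictitious ensemble of density $e^{n(\delta+\nu[n])}$, i.e.\ of NLD $\delta+\nu[n]$. Replaying the computation that produced \eqref{eq_bound_mhs_exp}---optimizing over $r$, identifying the UBT/SBT trade-off, and reading off the three regimes of $E_r$---but now with the shifted NLD, delivers $P_e(\Lambda_0[n]) \le e^{-n(E_r(\delta+\nu[n])+o(1))}$, which is \eqref{eq_bound_nrmh1_exp}. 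I would stress that $E_r$ must be evaluated at the shifted argument itself: since $\nu[n]$ need not vanish, the exponent is genuinely $E_r(\delta+\nu[n])$ rather than $E_r(\delta)$ plus a first-order correction.

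The main obstacle is the coupling between $\alpha$ and the optimization radius $r$: by \eqref{eq_bound_nrmh1_alpha} one has $\alpha = \max_{x\le e^{\delta}\cdot 2r}\alpha^{\textnormal{opt}}(x)$, which depends on $r$, whereas the MHS derivation fixes $r^\ast = (\beta V_n)^{-1/n}$. I would resolve this using the fixed-point characterization behind the DMHS optimization iteration $r \leftarrow (\alpha\beta V_n)^{-1/n}$: at its fixed point the optimizing radius of the DMHS bound is precisely the MHS-optimal radius for density $\alpha\beta$, so $\alpha[n]$ is well defined as the value attained there and the substitution $\delta\to\delta+\nu[n]$ is self-consistent. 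The remaining care is to keep the error term $o(1)$: this needs $\nu[n]$ to stay bounded, so that $\delta+\nu[n]$ remains in a compact region where $E_r$ is continuous and the volume and polynomial prefactors (such as $V_n$ and $\rho^n$) contribute only subexponentially. I expect this self-consistency of the $\alpha$--$r$ coupling to be the sole delicate point; the rest is a transcription of the established MHS-to-exponent argument with $\delta$ shifted to $\delta+\nu[n]$.
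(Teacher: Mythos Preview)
Your proposal is correct and matches the paper's approach exactly: the paper's proof is a single sentence observing that replacing $\beta$ with $\alpha[n]\beta$ in Theorem~\ref{thm_bound_nrmh1} is equivalent to replacing $\delta$ with $\delta+\nu[n]$, after which the MHS exponential form \eqref{eq_bound_mhs_exp} applies verbatim. Your discussion is in fact more careful than the paper's, which does not address the $\alpha$--$r$ coupling or the uniformity of the $o(1)$ term at all.
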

where $[n]$ indicates the $n$'th element of the sequence.
\begin{proof}
It follows from the proof of theorem \ref{thm_bound_nrmh1}, that replacing $\beta$ with $\beta\alpha[n]$ there, is equivalent to replacing $\delta$ with $\delta+\nu[n]$ here.
\end{proof}
Clearly \eqref{eq_bound_nrmh1_exp} can be used to determine the exponential decay of the error probability of a specific lattice \textit{sequence}. This leads to the following corollary.
\begin{cor}[Gap to Capacity 1]
A lattice \textit{sequence} for which
\begin{equation}
\lim_{n\rightarrow\infty} \frac{1}{n}\log{\alpha[n]} = 0
\end{equation}
achieves the unrestricted channel error-exponent.
\end{cor}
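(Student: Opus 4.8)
The plan is to obtain the corollary directly from the Non-Random Coding Error Exponent theorem by a continuity argument, introducing no new machinery. First I would take the normalized negative logarithm of the theorem's bound, giving
\begin{equation}
-\frac{1}{n}\log P_e(\Lambda_0[n]) \geq E_r(\delta+\nu[n]) + o(1),
\end{equation}
so that the achievable exponent of the specific sequence is controlled from below by $E_r$ evaluated at the perturbed NLD $\delta+\nu[n]$, where $\nu[n]=\frac{1}{n}\log\alpha[n]$. The whole task is then to show that this perturbation is asymptotically inconsequential.

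The key step is the continuity of $E_r(\cdot)$ at the fixed NLD $\delta$ of the sequence. I would argue as follows. When $\delta$ lies in the interior of one of the three branches defining $E_r$, the hypothesis $\lim_{n\to\infty}\frac{1}{n}\log\alpha[n]=0$ --- i.e. $\nu[n]\to 0$ --- guarantees that for all sufficiently large $n$ the argument $\delta+\nu[n]$ lies in the same branch as $\delta$, so that $E_r(\delta+\nu[n])\to E_r(\delta)$ follows from the manifest continuity of that single elementary branch formula. The only arguments requiring more than within-branch continuity are the two breakpoints $\delta_{cr}$ and $\delta^*$, where I would check cross-branch agreement directly; at $\delta^*$ this is immediate, since the middle branch $\tfrac{1}{2}\big(e^{2(\delta^*-\delta)}-2(\delta^*-\delta)-1\big)$ vanishes there and matches the identically-zero high-density branch. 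In all cases one concludes
\begin{equation}
\lim_{n\to\infty} E_r(\delta+\nu[n]) = E_r(\delta).
\end{equation}

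Combining the two displays and sending $n\to\infty$, the $o(1)$ term vanishes and
\begin{equation}
\liminf_{n\to\infty}\ -\frac{1}{n}\log P_e(\Lambda_0[n]) \geq E_r(\delta),
\end{equation}
which is precisely the assertion that the specific lattice sequence attains the unrestricted (Poltyrev) error-exponent $E_r(\delta)$ of its density, as given by \eqref{eq_bound_mhs_exp}. This is the claimed conclusion.

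Since the theorem already supplies the exponential bound and the Poltyrev exponent is elementary, I expect no substantial obstacle. The single point deserving care is that $\nu[n]$ need not be one-signed: $\alpha[n]$ may lie on either side of $1$, so $\delta+\nu[n]$ can approach $\delta$ from above or below and, for small $n$, can even fall in a neighbouring branch of $E_r$. This is why two-sided continuity of $E_r$ --- rather than its monotonicity alone --- is what the argument genuinely relies on, and why the breakpoint check is the only nontrivial ingredient.
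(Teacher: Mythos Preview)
Your proposal is correct and follows the same approach as the paper, which simply states that the corollary ``follows immediately from \eqref{eq_bound_nrmh1_exp} and the definition of $\nu[n]$.'' You have merely made explicit the continuity-of-$E_r$ step that the paper's one-line proof leaves tacit.
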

\begin{proof}
Follows immediately from \eqref{eq_bound_nrmh1_exp} and the definition of $\nu[n]$.
\end{proof}
When certain stronger conditions on the lattice apply, the following simpler corollary can be used.
\begin{cor}[Gap to Capacity 2]
\label{cor_cap_gap_2}
A lattice \textit{sequence} for which $\alpha^{\textnormal{rng}}(x)$ (per dimension $n$) is monotonically non-increasing in $x$ and
\begin{equation}
\label{eq_cor_cap_gap_2}
\lim_{n\rightarrow\infty} \frac{1}{n} \log \left( \frac{\mathcal{N}_1[n]}{e^{n\delta}V_n(\lambda_1[n])^n} \right) = 0
\end{equation}
achieves the unrestricted channel error-exponent.
\end{cor}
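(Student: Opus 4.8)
The plan is to reduce this corollary to the preceding one (Gap to Capacity 1) by showing that its two hypotheses force the multiplier $\alpha[n]$ from \eqref{eq_bound_nrmh1_alpha} to satisfy $\frac{1}{n}\log\alpha[n]\to 0$. Recall that $\alpha[n]=\max_{x\le e^{\delta}\cdot 2r}\alpha^{\textnormal{opt}}(x)$, so it suffices to pin this maximum down to the constant $\mathcal{N}_1/(e^{n\delta}V_n\lambda_1^n)$ appearing inside the limit \eqref{eq_cor_cap_gap_2}. I would obtain this by squeezing $\alpha[n]$ between a matching upper and lower bound.

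First I would bound $\alpha[n]$ from above. Since $\alpha^{\textnormal{rng}}(x)$ lies in the feasible set defined by \eqref{eq_alpha1_2}, and $\alpha^{\textnormal{opt}}(x)$ is by \eqref{eq_alpha1_3} precisely the feasible function minimizing $\max_x\alpha(x)$, we have $\max_x\alpha^{\textnormal{opt}}(x)\le\max_x\alpha^{\textnormal{rng}}(x)$. Invoking the monotonicity hypothesis, $\alpha^{\textnormal{rng}}(x)$ attains its maximum on the innermost shell $\mathcal{S}_1=\{x:0<x\le\widetilde{\lambda}_1\}$ (using $\widetilde{\lambda}_0=0$), where its prescribed value is $\mathcal{N}_1/(V_n\widetilde{\lambda}_1^n)$. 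Hence $\alpha[n]\le\mathcal{N}_1/(V_n\widetilde{\lambda}_1^n)$.

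For the matching lower bound I would use the cumulative constraint \eqref{eq_alpha1_2} evaluated at $r=\widetilde{\lambda}_1$: every feasible function obeys $\int_{0^+}^{\widetilde{\lambda}_1}\alpha(x)\,nV_nx^{n-1}\,dx=\int_{0^+}^{\widetilde{\lambda}_1}N'(R)\,dR\ge\int_{0^+}^{\widetilde{\lambda}_1}N(R)\,dR=\mathcal{N}_1$. As the total weight $\int_{0^+}^{\widetilde{\lambda}_1}nV_nx^{n-1}\,dx$ equals $V_n\widetilde{\lambda}_1^n$, averaging forces $\max_{x\le\widetilde{\lambda}_1}\alpha^{\textnormal{opt}}(x)\ge\mathcal{N}_1/(V_n\widetilde{\lambda}_1^n)$, and a fortiori $\alpha[n]\ge\mathcal{N}_1/(V_n\widetilde{\lambda}_1^n)$ as long as the optimization radius $e^{\delta}\cdot 2r$ covers $\widetilde{\lambda}_1$, which holds whenever $M\ge 1$. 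The two estimates coincide, giving $\alpha[n]=\mathcal{N}_1/(V_n\widetilde{\lambda}_1^n)$. Since normalizing $\Lambda_0$ to unit determinant scales every norm by $e^{\delta}$, i.e.\ $\widetilde{\lambda}_1=e^{\delta}\lambda_1$, this becomes $\alpha[n]=\mathcal{N}_1/(e^{n\delta}V_n\lambda_1^n)$. Taking $\frac{1}{n}\log(\cdot)$ and applying \eqref{eq_cor_cap_gap_2} yields $\frac{1}{n}\log\alpha[n]\to 0$, and Gap to Capacity 1 finishes the argument.

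The main obstacle I anticipate is tying the upper and lower estimates to the \emph{same} constant $\mathcal{N}_1/(V_n\widetilde{\lambda}_1^n)$: one must verify that monotonicity genuinely localizes the maximum of $\alpha^{\textnormal{opt}}$ at the innermost shell, that the region $x\le e^{\delta}\cdot 2r$ in \eqref{eq_bound_nrmh1_alpha} contains $\widetilde{\lambda}_1$ throughout the operating regime, and that the $\lambda_{\textnormal{max}}$-dependence of $\alpha^{\textnormal{opt}}$ flagged after \eqref{eq_alpha1_3} does not perturb this innermost-shell value.
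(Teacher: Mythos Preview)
Your proposal is correct and follows essentially the same approach as the paper: both identify $\alpha[n]=\mathcal{N}_1[n]/(e^{n\delta}V_n(\lambda_1[n])^n)$ under the monotonicity hypothesis and then invoke Gap to Capacity~1. The paper compresses the first step into the single assertion that monotonicity of $\alpha^{\textnormal{rng}}(x)$ forces $\alpha^{\textnormal{opt}}(x)=\alpha^{\textnormal{rng}}(x)$ outright, whereas you instead sandwich $\alpha[n]$ between matching upper and lower bounds; your argument in effect supplies the justification behind the paper's terser claim.
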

\begin{proof}
When $\alpha^{\textnormal{rng}}(x)$ is monotonically non-increasing then $\alpha^{\textnormal{opt}}(x) = \alpha^{\textnormal{rng}}(x)$ and $\alpha[n] = \frac{\mathcal{N}_1[n]}{e^{n\delta}V_n(\lambda_1[n])^n}$.
\end{proof}
Let us examine corollary \ref{cor_cap_gap_2}.
Assume a sequence of lattices with minimum distance $\lambda_1[n]=e^{-\delta}V_n^{-1/n}$ is available (see \cite{p:IngberExpurgatedInfiniteConstellations} for proof of existence).
Plugging this back into \eqref{eq_cor_cap_gap_2} leads to the following necessary condition for the bound \eqref{eq_bound_nrmh1_exp} to achieve the unrestricted channel error-exponent
\begin{equation}
\label{eq_cor_cap_gap_2_eg}
\lim_{n\rightarrow\infty} \frac{1}{n} \log \left( \mathcal{N}_1[n] \right) = 0,
\end{equation}
whether the monotonicity condition applies or not.
Although this only amounts to conditioning on the bound (and not on the lattice sequence), condition \eqref{eq_cor_cap_gap_2_eg} gives an insight to the close relationship between the spectral distances and their enumeration.
We conjecture that at rates close to capacity, the bound is tight, leading to \eqref{eq_cor_cap_gap_2} being a necessary condition on the lattice sequence itself.

We conclude this section with an illustration of the exponential decay series $\nu[n]$ of the first three lattices of the Barnes-Wall lattice \textit{sequence} $BW_4=D_4$, $BW_8=E_8$, and $BW_{16}=\Lambda_{16}$.
Unfortunately the distance-spectrum for $BW_n$ is generally unknown, preventing asymptotical analysis.
Nonetheless an interpolation for dimension $4$ to $16$ is presented in figure \ref{fig_bw_nu}.
The spectral data is taken from \cite{b:ConwaySpherePackings}.

\begin{figure}[htp]
\center{\includegraphics[width=0.5\textwidth]{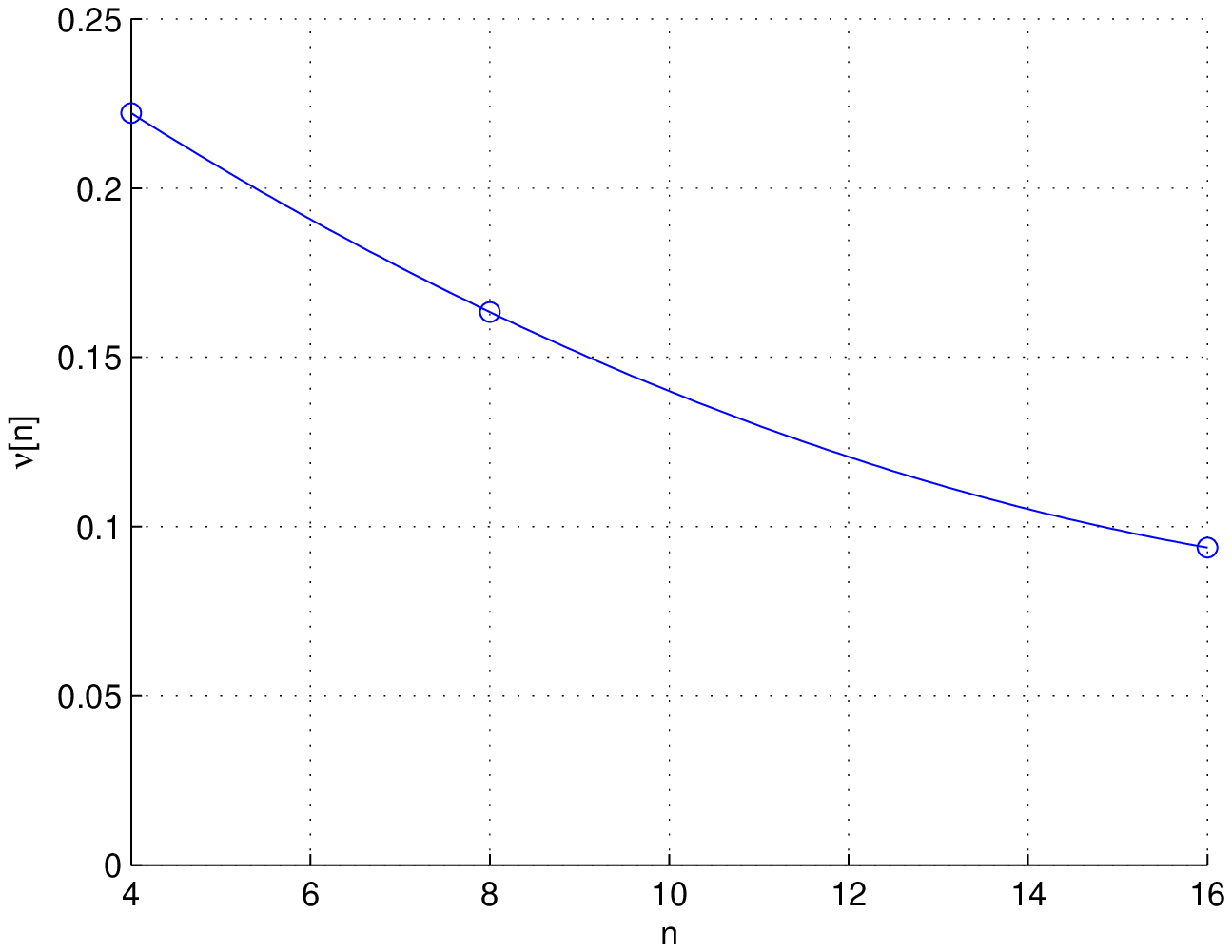}}
\caption{\label{fig_bw_nu} The exponential decay series $\nu[n]$ for the lattice \textit{sequence} $BW_n$, calculated for dimensions $4$, $8$, and $16$ and interpolated in-between.}
\end{figure}

Examination of figure \ref{fig_bw_nu} shows that the upper bound on the gap to capacity decreases with the increase in $n$, at least for the first three lattices in the sequence.
Although full spectral data is not available for the remainder of the sequence, the minimal distance and its enumeration are known analytically.
Assuming momentarily that the condition on $\alpha^{\textnormal{rng}}(x)$ as presented in corollary \ref{cor_cap_gap_2} holds, we can try to examine $\frac{1}{n} \log \left( \frac{\mathcal{N}_1[n]}{e^{n\delta}V_n(\lambda_1[n])^n} \right)$ as an upper bound on the gap to capacity.
This examination is illustrated in figure \ref{fig_bw_nu_ext}, where the first three lattices coincide with the previous results in figure \ref{fig_bw_nu}.
Clearly, these results are a lower bound on an upper bound and cannot indicate one way or the other.
Nonetheless, it seems that the results are consistent with the well-known coding performance of Barnes-Wall lattices with increasing dimension.

\begin{figure}[htp]
\center{\includegraphics[width=0.5\textwidth]{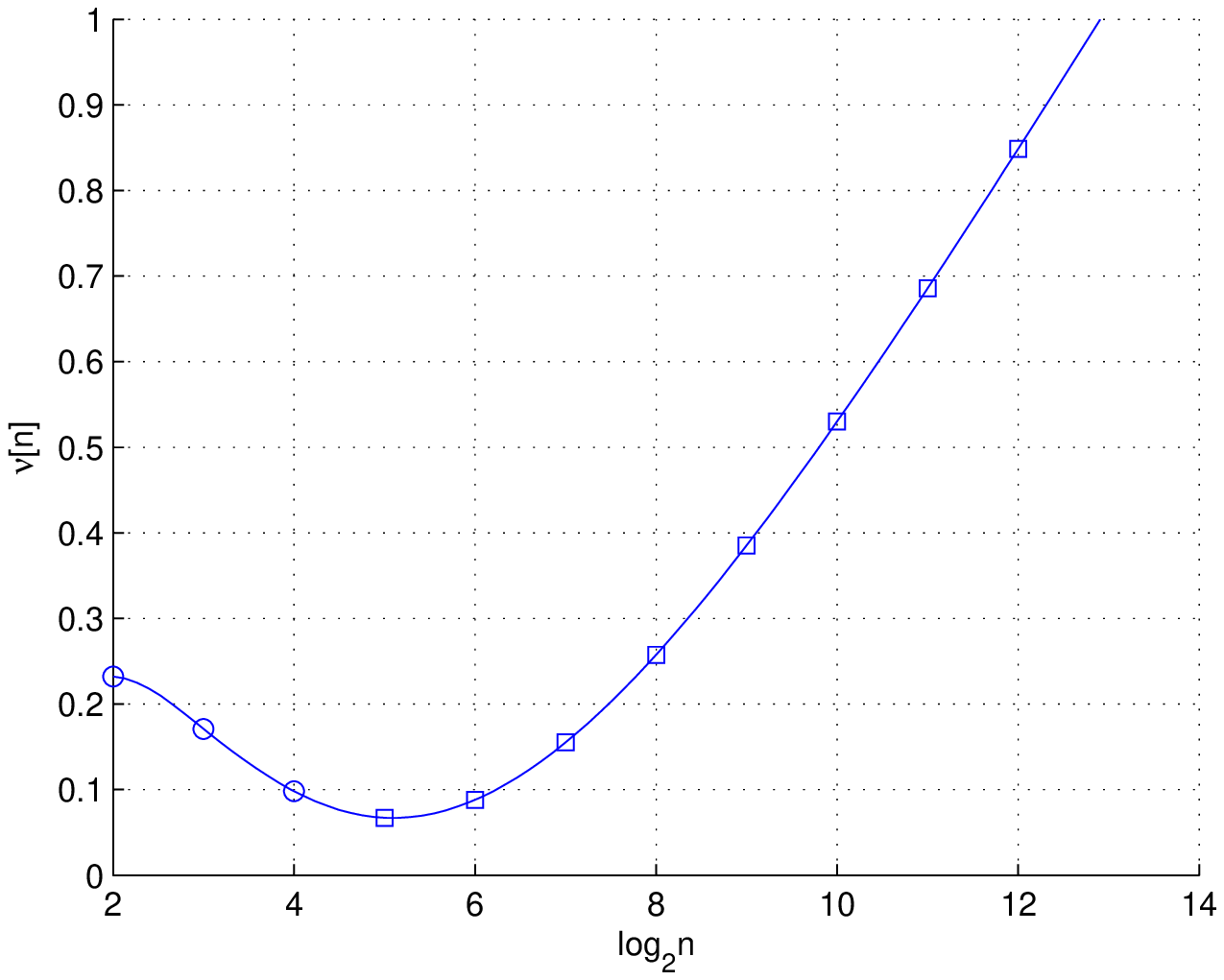}}
\caption{\label{fig_bw_nu_ext} $\frac{1}{n} \log \left( \frac{\mathcal{N}_1[n]}{e^{n\delta}V_n(\lambda_1[n])^n} \right)$ for the first few dimensions of the lattice \textit{sequence} $BW_n$. Coincides with $\frac{1}{n}\log{\alpha[n]}$ for the first three lattices.}
\end{figure}

\appendix
\subsection{Linear Program for Finding $\alpha^{\textnormal{opt}}(x)$}
\label{app_linear_program_alpha1}
We begin by restating the conditions describing $\alpha^{\textnormal{opt}}(x)$.
\begin{align}
\label{eq_linear_program_alpha1_1}
&\alpha^{\textnormal{opt}}(x) = \alpha_j(x) \nonumber \\
&j = \argmin_{i \in I} \max_{x\leq  e^{\delta}\lambda_{\textnormal{max}}} \alpha_i(x)
\end{align}
where $\alpha(x)$ is expressed by
\begin{equation}
\label{eq_linear_program_alpha1_2}
\alpha(x)
\triangleq \left\{
\begin{array}{lr}
\frac{N'(x)}{nV_n x^{n-1}} & : x \neq 0 \\
0 & : x = 0
\end{array}
\right.
\end{equation}
and $N'(x)$ satisfies the inequality
\begin{equation}
\label{eq_linear_program_alpha1_3}
\int_{0^+}^r N(x) dx
\leq \int_{0^+}^r N'(x) dx
\qquad \forall r\in (0,e^{\delta}\lambda_{\textnormal{max}}].
\end{equation}
Recall that $N'(x)$ is a smoothed version of the continuous spectrum $N(x)$ such that spectral mass is allowed to shift down from higher to lower distances.
As is apparent from the conditions above, finding $\alpha^{\textnormal{opt}}(x)$ is equivalent to finding $N'(x)$ such that \eqref{eq_linear_program_alpha1_1} is optimized.
It is immediately clear from \eqref{eq_linear_program_alpha1_1} that since we are searching for the function that minimizes its maximum value over the domain $\{x: 0<x\leq e^{\delta}\lambda_{\textnormal{max}}\}$, we can only benefit from limiting $N'(x)$ to consist of only the spectral elements $\{\mathcal{N}_j\}_{j=1}^M$ where $M$ is the maximal index such that $\lambda_j\leq\lambda_{\textnormal{max}}$.
This limiting necessarily does not increase the max value over the domain $\{x: 0<x\leq  e^{\delta}\lambda_{\textnormal{max}}\}$ and is consistent with \eqref{eq_linear_program_alpha1_3} for $r=\lambda_{\textnormal{max}}$.
Clearly, optimization of \eqref{eq_linear_program_alpha1_1} is achieved by an $\alpha(x)$ that is as constant as possible over the domain $\{x: 0<x\leq e^{\delta}\lambda_{\textnormal{max}}\}$, as is typical for a $\min\max$ optimization.
We say ``as constant as possible'' since \eqref{eq_linear_program_alpha1_3} imposes constraints on our ability to find $N'(x)$ such that $\alpha(x)$ is constant over the whole domain $\{x: 0<x\leq  e^{\delta}\lambda_{\textnormal{max}}\}$. Given \eqref{eq_linear_program_alpha1_3}, the optimum $\alpha(x)$ is piecewise constant over the sub-domains $\{x: \widetilde{\lambda}_{j-1}<x\leq\widetilde{\lambda}_j\}$ with $1\leq j\leq M$, and is zero in the sub-domain $\{x: \widetilde{\lambda}_M<x\leq\widetilde{\lambda}_{\textnormal{max}}\}$. With the piecewise nature of $\alpha(x)$ established, and due to \eqref{eq_linear_program_alpha1_2}
\begin{align}
\label{eq_linear_program_alpha1_4}
\int_{\widetilde{\lambda}_{j-1}}^{\widetilde{\lambda}_j}N'(x)
&= \int_{\widetilde{\lambda}_{j-1}}^{\widetilde{\lambda}_j}\alpha(x)nV_n x^{n-1}dx \nonumber \\
&= \alpha_j\int_{\widetilde{\lambda}_{j-1}}^{\widetilde{\lambda}_j}nV_n x^{n-1}dx \nonumber  \\
&= \alpha_j V_n\left(\widetilde{\lambda}_{j}^n-\widetilde{\lambda}_{j-1}^n\right)
\end{align}
where $\alpha_j$ is the constant value of $\alpha(x)$ over sub-domain $\{x: \widetilde{\lambda}_{j-1}<x\leq\widetilde{\lambda}_j\}$.
It is quite clear from \eqref{eq_linear_program_alpha1_4}, that in our case $N'(x)$ can be completely described by $\epsilon_j = \int_{\widetilde{\lambda}_{j-1}}^{\widetilde{\lambda}_j}N'(x)$ for $1\leq j\leq M$.

We now present the general formulation necessary to construct the LP for optimizing $\alpha(x)$.
Take each spectral element $\mathcal{N}_j$ and subdivide it into $j$ parts $\{\mathcal{N}_{j,i}\}_{i=1}^j$ such that $\mathcal{N}_{j,i}\geq0$ and $\mathcal{N}_j = \sum_{i=1}^j \mathcal{N}_{j,i}$.
$\mathcal{N}_{j,i}$ can be viewed as the contribution of the $j$'th spectral element $\mathcal{N}_j$ to the $i$'th sub-domain $\{x: \widetilde{\lambda}_{i-1}<x\leq\widetilde{\lambda}_i\}$ where $1\leq i\leq j$.
Defining $\epsilon_i$ to be the total spectral mass associated with sub-domain $\{x: \widetilde{\lambda}_{i-1}<x\leq\widetilde{\lambda}_i\}$, we can summarize the description of the LP variables $\mathcal{N}_{j,i}$ as follows:
\begin{align}
&\mathcal{N}_{j,i} \geq0                        &\qquad \forall \quad i\leq j\leq M \\
&\sum_{i=1}^j \mathcal{N}_{j,i} = \mathcal{N}_j &\qquad \forall \quad 1\leq j\leq M \\
&\sum_{j=i}^M \mathcal{N}_{j,i} = \epsilon_i    &\qquad \forall \quad 1\leq j\leq M
\end{align}
An illustration of the LP variables for $M=4$ is presented in the following table:
\begin{table}[h]
\centering
\begin{tabular}{|c||c|c|c|c|}
\hline
             & $\mathcal{N}_1$     & $\mathcal{N}_2$     & $\mathcal{N}_3$     & $\mathcal{N}_4$ \\
\hline\hline
$\epsilon_1$ & $\mathcal{N}_{1,1}$ & $\mathcal{N}_{2,1}$ & $\mathcal{N}_{3,1}$ & $\mathcal{N}_{4,1}$ \\
\hline
$\epsilon_2$ & -                   & $\mathcal{N}_{2,2}$ & $\mathcal{N}_{3,2}$ & $\mathcal{N}_{4,2}$ \\
\hline
$\epsilon_3$ & -                   & -                   & $\mathcal{N}_{3,3}$ & $\mathcal{N}_{4,3}$ \\
\hline
$\epsilon_4$ & -                   & -                   & -                   & $\mathcal{N}_{4,4}$ \\
\hline
\end{tabular}
\end{table}

The LP variables set $\{\{\mathcal{N}_{j,i}\}_{i=1}^j\}_{j=1}^M$ completely defines $N'(x)$, such that its optimum completely defines $\alpha^{\textnormal{opt}}(x)$.
Finally the LP is defined as follows:
\begin{align}
\label{eq_linear_program_alpha1_5}
&\text{minimize } \alpha : \nonumber \\
&\mathcal{N}_{j,i} \geq0                                                                                                  &\qquad \forall \quad i\leq j\leq M \\
&\sum_{i=1}^j \mathcal{N}_{j,i} = \mathcal{N}_j                                                                           &\qquad \forall \quad 1\leq j\leq M \\
&\frac{\sum_{j=i}^M \mathcal{N}_{j,i}}{V_n\left(\widetilde{\lambda}_{j}^n-\widetilde{\lambda}_{j-1}^n\right)} \leq \alpha &\qquad \forall \quad 1\leq j\leq M.
\end{align}
It is easy to show that the LP optimization landscape is bounded, by bounding each variable individually
\begin{equation}
0 \leq \mathcal{N}_{j,i} \leq \mathcal{N}_j \qquad \forall \quad i\leq j\leq M.
\end{equation}
It is simple to show that there exists a solution to the LP constraints. One trivial solution is
\begin{equation}
\mathcal{N}_{j,i}
= \left\{
\begin{array}{lr}
\mathcal{N}_j & : i=j \\
0 & : i\neq j
\end{array}
\right.
\qquad \forall \quad i\leq j\leq M.
\end{equation}
The LP can be solved by a water-filling algorithm which is linear in the number of considered spectral elements $M$.
At the $j$'th step the water (a.k.a. spectral mass $\mathcal{N}_j$) is poured into the $j$'th bucket (a.k.a. sub-domain $\{x: \widetilde{\lambda}_{j-1}<x\leq\widetilde{\lambda}_j\}$), until it overflows (a.k.a. exceeds the previous maximum), at which point it is permitted to pour over lower sub-domains.
A graphical representation of the algorithm is presented in figure \ref{fig_lp_wf}.
\input{lp_wf.tpx}
The optimality of the water-filling solution can be verified by induction.

\subsection{Proof of Theorem \ref{thm_bound_general}}
\label{app_bound_general}
We use Herzberg and Poltyrev's upper sphere bound definition for AWGN \cite{j:HerzbergTechniquesofBounding} to introduce the general technique in the context of lattices over ACSN channels.
Consider an $n$-dimensional lattice $\Lambda$ transmitted over an ACSN channel. In each transmission a lattice point $\sbf{\lambda}$ is transmitted and $\mbf{y}=\sbf{\lambda}+\mbf{z}$ is received. The additive noise $\mbf{z}$ corrupts the transmitted lattice point. An ML decoder decodes correctly so long as $\mbf{z}$ is contained in the voronoi cell surrounding $\sbf{\lambda}$.
The probability of decoding error can be expressed as
\begin{align}
\label{eq_gallager_1}
P_e(\Lambda)
= \Pr(e|\norm{\mbf{z}}\leq r) \Pr(\norm{\mbf{z}}\leq r) \nonumber \\
+ \Pr(e|\norm{\mbf{z}}>r) \Pr(\norm{\mbf{z}}>r)
\end{align}
where $e$ denotes a decoding error event, and $r$ is a real positive parameter.
Since $\Pr(\cdot)\leq 1$, equation \eqref{eq_gallager_1} can be trivially upper bounded as
\begin{equation}
\label{eq_gallager_2}
P_e(\Lambda)
\leq \Pr(e|\norm{\mbf{z}}\leq r) + \Pr(\norm{\mbf{z}}>r).
\end{equation}
Assuming without loss of generality that $\sbf{\lambda}=0$ was transmitted, the first term of \eqref{eq_gallager_2} can be upper bounded as
\begin{align}
\label{eq_herzberg_1}
\Pr(e|\norm{\mbf{z}}\leq r)
&= \Pr \left( \bigcup_{\sbf{\lambda}\in\Lambda/\{0\}} (e_{\sbf{\lambda}}|\norm{\mbf{z}}\leq r) \right) \nonumber \\
&\leq \sum_{\sbf{\lambda}\in\Lambda/\{0\}} \Pr(e_{\sbf{\lambda}}|\norm{\mbf{z}}\leq r)
\end{align}
where $e_{\sbf{\lambda}}$ is a pairwize error resulting from decoding $\sbf{\lambda}\neq 0$, and the inequality is due to the union bound.
The bound after substitution of \eqref{eq_herzberg_1} in \eqref{eq_gallager_2} is given by
\begin{equation}
\label{eq_herzberg_2}
P_e(\Lambda)
\leq \sum_{\sbf{\lambda}\in\Lambda/\{0\}} \Pr(e_{\sbf{\lambda}}|\norm{\mbf{z}}\leq r) + \Pr(\norm{\mbf{z}}>r).
\end{equation}
Consequently, the tightest bound is found by optimization over $r$,
\begin{equation}
\label{eq_herzberg_3}
P_e(\Lambda) \leq
\min_{r} \left(
\sum_{\sbf{\lambda}\in\Lambda/\{0\}} \Pr(e_{\sbf{\lambda}}|\norm{\mbf{z}}\leq r) + \Pr(\norm{\mbf{z}}>r)
\right).
\end{equation}
Following techniques from Ingber \etal \cite{j:IngberFiniteDimensional}, the inner probability term of the first term on the right-hand-side of \eqref{eq_herzberg_2} can be expressed as\footnotemark{}
\begin{align}
\label{eq_ingber_1}
\Pr(e_{\sbf{\lambda}}|&\norm{\mbf{z}}\leq r) \nonumber \\
&= \int_0^r f_{\norm{\mbf{z}}}(\rho) \Pr(e_{\sbf{\lambda}}|\norm{\mbf{z}}=\rho) d\rho \nonumber \\
&= \int_0^r f_{\norm{\mbf{z}}}(\rho) \Pr(\norm{\mbf{z}-\sbf{\lambda}}\leq\norm{\mbf{z}}|\norm{\mbf{z}}=\rho) d\rho \nonumber \\
&= \int_0^r f_{\norm{\mbf{z}}}(\rho) \Pr(\sbf{\lambda}\in \Ball(\mbf{z},\norm{\mbf{z}})|\norm{\mbf{z}}=\rho) d\rho
\end{align}
where $f_{\norm{\mbf{z}}}(\rho)$ is the \pdf of the noise vector's norm, and $\Ball(\mbf{z},\norm{\mbf{z}})$ is an $n$-dimensional ball of radius $\norm{\mbf{z}}$ centered around $\mbf{z}$.
\footnotetext{The integral in \eqref{eq_ingber_1} can be further constrained to $[\norm{\sbf{\lambda}}/2,r]$ since $\Pr(\sbf{\lambda}\in \Ball(\mbf{z},\norm{\mbf{z}})|\norm{\mbf{z}}=\rho) = 0$ for $\rho<\norm{\sbf{\lambda}}/2$.}
Using the same notations, the second term on the right-hand-side of \eqref{eq_herzberg_2} can be expressed as
\begin{equation}
\label{eq_ingber_2}
\Pr(\norm{\mbf{z}}>r) = \int_r^\infty f_{\norm{\mbf{z}}}(\rho) d\rho.
\end{equation}
Plugging \eqref{eq_ingber_1} and \eqref{eq_ingber_2} into \eqref{eq_herzberg_3} completes the proof.
\subsection{Proof of Theorem \ref{thm_bound_mhs}}
\label{app_bound_mhs}
This proof is brought for completeness and reading convenience, although it has been seen, in similar form, elsewhere, previously.

Using inequality \eqref{eq_mhs},
\begin{align}
& \sum_{\sbf{\lambda}\in\Lambda/\{0\}} \int_0^r f_{\norm{\mbf{z}}}(\rho) \Pr(\sbf{\lambda}\in \Ball(\mbf{z},\norm{\mbf{z}})|\norm{\mbf{z}}=\rho) d\rho \nonumber \\
& \leq \beta \int_{\mathds{R}^n} \int_0^r f_{\norm{\mbf{z}}}(\rho) \Pr(\mbf{x}\in \Ball(\mbf{z},\norm{\mbf{z}})|\norm{\mbf{z}}=\rho) d\rho d\mbf{x} \nonumber \\
& = \beta \int_0^r f_{\norm{\mbf{z}}}(\rho) \int_{\mathds{R}^n} \Pr(\mbf{x}\in \Ball(\mbf{z},\norm{\mbf{z}})|\norm{\mbf{z}}=\rho) d\mbf{x} d\rho \nonumber \\
& = \beta \int_0^r f_{\norm{\mbf{z}}}(\rho) \int_{\mathds{R}^n} \E(\sigma\{\mbf{x}\in \Ball(\mbf{z},\norm{\mbf{z}})\}|\norm{\mbf{z}}=\rho) d\mbf{x} d\rho \nonumber \\
& = \beta \int_0^r f_{\norm{\mbf{z}}}(\rho) \E \left( \left. \int_{\mathds{R}^n} \sigma\{\mbf{x}\in \Ball(\mbf{z},\norm{\mbf{z}})\} d\mbf{x} \right| \norm{\mbf{z}}=\rho \right) d\rho \nonumber \\
& = \beta \int_0^r f_{\norm{\mbf{z}}}(\rho) \int_{\mathds{R}^n} \sigma\{\mbf{x}\in \Ball(\mbf{z},\norm{\mbf{z}})| \norm{\mbf{z}}=\rho \} d\mbf{x} d\rho \nonumber \\
& = \beta V_n \int_0^r f_{\norm{\mbf{z}}}(\rho) \rho^n d\rho
\end{align}
where $\sigma\{\mbf{x}\in \Ball(\mbf{z},\norm{\mbf{z}})\}$ is the characteristic function of $\Ball(\mbf{z},\norm{\mbf{z}})$. The optimal value $r^*$ of $r$ is found by differentiating \eqref{eq_bound_mhs},
\begin{equation}
\beta V_n f_{\norm{\mbf{z}}}(r) r^n = f_{\norm{\mbf{z}}}(r).
\end{equation}
\subsection{Proof of Theorem \ref{thm_bound_nrmh2}}
\label{app_bound_nrmh2}
Using inequality \eqref{eq_nrmh2_alpha1},
\begin{align}
\label{eq_bound_nrmh2_proof_1}
& \sum_{\sbf{\lambda}\in\Lambda_0/\{0\}} \int_0^r f_{\norm{\mbf{z}}}(\rho) \Pr(\sbf{\lambda}\in \Ball(\mbf{z},\norm{\mbf{z}})|\norm{\mbf{z}}=\rho) d\rho \nonumber \\
& \leq \beta \sum_{j=0}^M \alpha_j \int_{\mathcal{S}_j} \int_0^r f_{\norm{\mbf{z}}}(\rho) \Pr(\mbf{x}\in \Ball(\mbf{z},\norm{\mbf{z}})|\norm{\mbf{z}}=\rho) d\rho d\mbf{x} \nonumber \\
\end{align}
Continue with the term inside the sum \eqref{eq_bound_nrmh2_proof_1},
\begin{align}
\label{eq_bound_nrmh2_proof_2}
& \int_{\mathcal{S}_j} \int_0^r f_{\norm{\mbf{z}}}(\rho) \Pr(\mbf{x}\in \Ball(\mbf{z},\norm{\mbf{z}})|\norm{\mbf{z}}=\rho) d\rho d\mbf{x} \nonumber \\
& = \int_0^r f_{\norm{\mbf{z}}}(\rho) \int_{\mathcal{S}_j} \Pr(\mbf{x}\in \Ball(\mbf{z},\norm{\mbf{z}})|\norm{\mbf{z}}=\rho) d\mbf{x} d\rho \nonumber \\
& = \int_0^r f_{\norm{\mbf{z}}}(\rho) \int_{\mathcal{S}_j} \E(\sigma\{\mbf{x}\in \Ball(\mbf{z},\norm{\mbf{z}})\}|\norm{\mbf{z}}=\rho) d\mbf{x} d\rho \nonumber \\
& = \int_0^r f_{\norm{\mbf{z}}}(\rho) \E \left( \left. \int_{\mathcal{S}_j} \sigma\{\mbf{x}\in \Ball(\mbf{z},\norm{\mbf{z}})\} d\mbf{x} \right| \norm{\mbf{z}}=\rho \right) d\rho \nonumber \\
& = \int_0^r f_{\norm{\mbf{z}}}(\rho) \int_{\mathcal{S}_j} \sigma\{\mbf{x}\in \Ball(\mbf{z},\norm{\mbf{z}})| \norm{\mbf{z}}=\rho \} d\mbf{x} d\rho \nonumber \\
& = \int_0^r f_{\norm{\mbf{z}}}(\rho) h_j(\rho) d\rho \nonumber \\
& = \int_{\lambda_j/2}^r f_{\norm{\mbf{z}}}(\rho) h_j(\rho) d\rho
\end{align}
where $\sigma\{\mbf{x}\in \Ball(\mbf{z},\norm{\mbf{z}})\}$ is the characteristic function of $\Ball(\mbf{z},\norm{\mbf{z}})$, and the last two equalities follow from the definition of $h_j(\rho)$.
Plugging \eqref{eq_bound_nrmh2_proof_2} back into \eqref{eq_bound_nrmh2_proof_1} completes the proof.
\subsection{Proof of Lemma \ref{lem_prob_ball}}
\label{app_prob_ball}
Clearly when $\norm{\mbf{x}}>2\rho$, the probability $\Pr(\mbf{x}\in \Ball(\mbf{z},\norm{\mbf{z}})|\norm{\mbf{z}}=\rho)=0$.

Begin by defining the i.i.d. Gaussian vector
\begin{equation}
\mbf{u} \sim \mathcal{N}^n(0,1).
\end{equation}
The vector $\mbf{z}$ conditioned such that $\norm{\mbf{z}}=\rho$ can be expressed as
\begin{equation}
\mbf{z} = \frac{\rho}{\norm{\mbf{u}}}\mbf{u}
\end{equation}
with no loss of generality.
Since $\mbf{u}$ is isotropically distributed, the vector $\mbf{x}$ can be set to
\begin{equation}
\mbf{x} = [\norm{\mbf{x}},0,\cdots,0]
\end{equation}
again with no loss of generality.
Continue by noting that $\mbf{x}\in \Ball(\mbf{z},\norm{\mbf{z}})$ is equivalent to $\norm{\mbf{z}-\mbf{x}}\leq\rho$, thus
\begin{align}
&\Pr(\mbf{x}\in \Ball(\mbf{z},\norm{\mbf{z}})|\norm{\mbf{z}}=\rho) \nonumber \\
& = \Pr(\norm{\mbf{z}-\mbf{x}}\leq\rho) \nonumber \\
& = \Pr\left(\norm{\frac{\rho}{\norm{\mbf{z}}}[z_1,\cdots,z_n]-[\norm{\mbf{x}},0,\cdots,0]}\leq\rho\right) \nonumber \\
& = \Pr\left(\norm{[z_1-\frac{\norm{\mbf{x}}\norm{\mbf{z}}}{\rho},z_2,\cdots,z_n]}\leq\norm{\mbf{z}}\right) \nonumber \\
& = \Pr\left(\left(z_1-\frac{\norm{\mbf{x}}\norm{\mbf{z}}}{\rho}\right)^2\leq z_1^2\right) \nonumber \\
& = \Pr\left(z_1\geq\frac{\norm{\mbf{x}}\norm{\mbf{z}}}{2\rho}\right) \nonumber \\
& = \frac{1}{2}\Pr\left(z_1^2\geq\frac{(\norm{\mbf{x}}/2\rho)^2}{1-\norm{\mbf{x}}/2\rho)^2}\sum_{i=2}^{n}z_i^2\right).
\end{align}
Substitute $t=(\norm{\mbf{x}}/2\rho)^2$ for easier readability, thus
\begin{align}
&\frac{1}{2}\Pr\left(z_1^2\geq\frac{t}{1-t}\norm{z_{2:n}}^2\right) \nonumber \\
& = \E\left(\frac{1}{2}\Pr\left(\left.z_1^2\geq\frac{t}{1-t}\norm{z_{2:n}}^2\right|z_{2:n}\right)\right) \nonumber \\
& = \E\left(Q\left(\sqrt{\frac{t}{1-t}\norm{z_{2:n}}^2}\right)\right) \nonumber \\
& \leq \E\left(\exp\left(-\frac{1}{2}\frac{t}{1-t}\norm{z_{2:n}}^2\right)\right) \nonumber \\
& = \int_{\mathds{R}^{n-1}} \exp\left(-\frac{1}{2}\frac{t}{1-t}\norm{z_{2:n}}^2\right) f_{\mathcal{N}^{n-1}(0,1)}(z_{2:n})dz_{2:n}\nonumber \\
& = (1-t)^{\frac{n-1}{2}} \int_{\mathds{R}^{n-1}} f_{\mathcal{N}^{n-1}(0,1-t)}(z_{2:n})dz_{2:n}\nonumber \\
& = (1-t)^{\frac{n-1}{2}}
\end{align}
where the inequality results from a fairly crude upper bound on the Gaussian error function $Q(x)\triangleq \Pr(\mathcal{N}(0,1)\geq x)\leq\exp(-x^2/2)$.
Undoing the substitution of $t$ completes the proof.
\subsection{Proof of Theorem \ref{thm_bound_sub}}
\label{app_bound_sub}
\begin{align}
\label{eq_bound_sub_proof}
& \sum_{\sbf{\lambda}\in\Lambda_0/\{0\}} \int_0^r f_{\norm{\mbf{z}}}(\rho) \Pr(\sbf{\lambda}\in \Ball(\mbf{z},\norm{\mbf{z}})|\norm{\mbf{z}}=\rho) d\rho \nonumber \\
&= \sum_{\sbf{\lambda}\in\Lambda_0/\{0\}} \int_{\norm{\sbf{\lambda}}/2}^r f_{\norm{\mbf{z}}}(\rho) \Pr(\sbf{\lambda}\in \Ball(\mbf{z},\norm{\mbf{z}})|\norm{\mbf{z}}=\rho) d\rho \nonumber \\
&\leq \sum_{\sbf{\lambda}\in\Lambda_0/\{0\}} \int_{\norm{\sbf{\lambda}}/2}^r f_{\norm{\mbf{z}}}(\rho) \left( 1- \left( \frac{\norm{\sbf{\lambda}}}{2\rho} \right)^2 \right)^{\frac{n-1}{2}} d\rho \nonumber \\
&= \sum_{j=1}^M \mathcal{N}_j \int_{\lambda_j/2}^r f_{\norm{\mbf{z}}}(\rho) \left( 1- \left( \frac{\lambda_j}{2\rho} \right)^2 \right)^{\frac{n-1}{2}} d\rho
\end{align}
where the first equality is due to $\Pr(\sbf{\lambda}\in \Ball(\mbf{z},\norm{\mbf{z}})|\norm{\mbf{z}}=\rho) = 0$ for $\rho < \norm{\sbf{\lambda}}/2$, the inequality is due to \eqref{eq_prob_ball}, and the last equality is formed by aggregation of identical distance summation terms.

\bibliographystyle{IEEEtran}
\bibliography{IEEEabrv,Yuval}

\end{document}